\newcommand*{\N}{\ensuremath{\mathbb{N}}}
\newcommand*{\nat}{\ensuremath{\mathbb{N}}}
\newcommand{\To}{\longrightarrow}
\newcommand{\manyk}[1]{{#1}_1,        \ldots       ,       {#1}_{k}}
\newcommand{\manyr}[1]{{#1}_1,        \ldots       ,       {#1}_{r}}
\newcommand{\manym}[1]{{#1}_1,        \ldots       ,       {#1}_{m}}
\newcommand{\many}[2]{{#1}_1,        \ldots       ,       {#1}_{#2}}
\renewcommand{\k}{{\sf K}}
\renewcommand{\d}{{\sf D}}
\renewcommand{\t}{{\sf T}}
\newcommand{\kf}{{\sf K4}}
\newcommand{\df}{{\sf D4}}
\newcommand{\sr}{{\sf S4}}
\renewcommand{\j}{{\sf J}}
\newcommand{\jd}{{\sf JD}}
\newcommand{\jdf}{{\sf JD4}}
\newcommand{\jf}{{\sf J4}}
\newcommand{\jt}{{\sf JT}}
\newcommand{\lp}{{\sf LP}}
\newcommand{\term}[2]{#1\!  :_{#2} \!}
\newcommand{\sterm}[1]{#1  : \!}
\newcommand{\cs}{\mathcal{CS}}
\renewcommand{\M}{\mathcal{M}}
\newcommand{\T}{\mathcal{T}}
\newcommand{\F}{\mathcal{F}}
\newcommand{\I}{\mathcal{I}}
\newcommand{\V}{\mathcal{V}}
\newcommand{\true}{\text{\it true}}
\newcommand{\false}{\text{\it false}}
\renewcommand{\A}{{\mathcal{E}}}        
\renewcommand{\E}{{\mathcal{E}}}
\newcommand{\ver}{\hookrightarrow}
\newcommand{\inver}{\hookleftarrow}
\title{\NEXP-completeness and Universal Hardness Results for Justification Logic}
\author{Antonis Achilleos}
\institute{
	The Graduate Center of CUNY \\
	365 Fifth Avenue\\
	New York, 
	NY 10016 
	USA	
	\\
		 \email{
	aachilleos@gc.cuny.edu
}
	 	}
\begin{document}
	
	\maketitle
	
	\begin{abstract} 
	We provide a lower complexity bound for the satisfiability problem of a multi-agent justification logic, establishing that the general \NEXP\ upper bound from our previous work is tight. We then use a simple modification of the corresponding reduction to prove that satisfiability for all multi-agent justification logics from there 
	is $\Sigma_2^p$-hard -- given certain reasonable conditions. Our methods improve on these required conditions for the same lower bound for the single-agent justification logics, proven by Buss and Kuznets in 2009, thus answering one of their open questions.
	\end{abstract}

\section{Introduction}
Justification Logic is the logic of justifications. Where in Modal Epistemic Logic we use formulas of the form $\Box \phi$ to denote that $\phi$ is known (or believed, etc), in Justification Logic, we use $\sterm{t}\phi$ to denote that $\phi$ is known \emph{for reason $t$} (i.e. $t$ is a \emph{justification} for $\phi$). 
Artemov introduced \lp, the first justification logic, in 1995 \cite{Art01BSL}, originally as a link between Intuitionistic Logic and Peano Arithmetic. Since then the field has expanded significantly, both in the variety of logical systems 
and in the fields it interacts with and is applied to (see \cite{Art2008LAI,Art08RSL} for an overview). 

In \cite{multi2} Yavorskaya introduced two-agent \lp\ with agents whose justifications may interact. We  studied the complexity of a generalization in \cite{Achilleos2014CLIMA} and \cite{Achilleos2014EUMAS}, discovering that unlike the case with single-agent Justification Logic 
as studied in
\cite{Kuz00CSL,Kuz08PhD,Kuz09LC,newlower,Achilleos2014JCSS}, the complexity of satisfiability jumps to \PSPACE- and \EXP-completeness when two or three agents are involved respectively, given appropriate interactions.
 In fact, the upper bound we proved was that all logics in this family have their satisfiability problem in \NEXP\ -- under reasonable assumptions.

The \NEXP\ upper complexity bound was not met with the introduction of a \NEXP-hard logic in \cite{Achilleos2014EUMAS}. The main contribution of this paper is that we present a \NEXP-hard justification logic from the family that was introduced in \cite{Achilleos2014EUMAS}, thus establishing that the general upper bound is tight. 

In general, the complexity of the satisfiability problem for a justification logic tends to be lower than the complexity of its corresponding modal logic\footnote{That is, the modal logic that is the result of substituting all justification terms in the axioms with boxes and adding the Necessitation rule.} (given the usual complexity-theoretic assumptions).
For example, while satisfiability for \k, \d, \kf, \df, \t, and \sr\ is \PSPACE-complete, the complexity of the corresponding justification logics (\j, \jd, \jf, \jdf, \jt, and \lp\ respectively) is in the second level of the polynomial hierarchy (in $\Sigma_2^p$, specifically). In the multi-agent setting we have already examined, this is still the case: many justification logics that so far demonstrate a complexity jump (to \PSPACE- or \EXP-completeness) have corresponding modal logics with an \EXP-complete satisfiability problem (c.f. \cite{Spaan93PhD,DBLP:conf/tableaux/Demri00,Achilleos2014JCSS,Achilleos14DiamondFreeArxiv}).
It is notable that, assuming $\EXP \neq \NEXP$, this is the first time we have a justification logic with a higher complexity than its corresponding modal logic, and, in fact, the reduction we use makes heavy use of the effects of the way a justification term is constructed.

In a justification logic, the logic's axioms are justified by \emph{constants}, 
a kind of minimal (not analyzable) justification.
A constant specification is part of the description of a justification logic and specifies exactly which constants justify which axioms.
There are certain standard assumptions  we often need to make when studying the complexity of a justification logic. 
One is that the logic has an axiomatically appropriate constant specification, which means that all axioms of the logic are justified by at least one justification constant.
Another is that the logic has a schematic constant specification, which means that each constant justifies a certain number of axiom \emph{schemes} (perhaps none) and nothing else. Finally, the third assumption is that the constant specification is schematically injective, that is, it is schematic and each constant justifies at most one scheme.

It is known that for (single-agent) justification logics \j, \jt, \jf, and \lp, the satisfiability problem is in $\Sigma_2^p$ for a schematic constant specification (\cite{Kuz00CSL})  and for  \jd, \jdf, the satisfiability problem is in $\Sigma_2^p$ for an axiomatically appropriate and schematic constant specification (\cite{Kuz08PhD,Achilleos2014JCSS}). As for the lower bounds, Milnikel has proven (\cite{Mil07APAL}) that \jf-satisfiability is $\Sigma_2^p$-hard for an axiomatically appropriate and schematic constant specification and that \lp-satisfiability is $\Sigma_2^p$-hard for an axiomatically appropriate, (schematic,) and schematically injective constant specification. Following that, Buss and Kuznets gave a general lower bound in \cite{newlower}, proving that for all the above logics, satisfiability is $\Sigma_2^p$-hard for an axiomatically appropriate, (schematic,) and schematically injective constant specification. This raised the question of whether the condition that the constant specification is schematically injective is a necessary one, which is answered in this paper.\footnote{The answer is `no'.}

We present a general lower bound, which applies to all logics from \cite{Achilleos2014EUMAS}. This includes all the single-agent logics whose complexity was studied in \cite{Kuz00CSL,Kuz08PhD,newlower,Achilleos2014JCSS}. 
In fact, 
Buss and Kuznets gave the same general lower bound for all the single-agent cases in \cite{newlower} and it is reasonable to expect that we could simply apply their techniques and achieve the same result in this general multi-agent setting. Our method, however, presents the following two advantages: it is a relatively simple reduction, a direct simplification of the more involved \NEXP-hardness reduction and very similar to Milnikel's method from \cite{Mil07APAL}; it is also an improvement of their result, even if it does not improve the bound itself in that for our results the requirements are that the constant specification is axiomatically appropriate and schematic -- and not that it is schematically injective as well.
%
In particular this means that we provide for the first time a tight lower bound for the full \lp\ (\lp\ where all axioms are justified by all constants). The disadvantage of our method is that, unlike the one of Buss and Kuznets, it cannot be adjusted to work on the reflected fragments of justification logic, the fragment which includes only the formulas of the form $\sterm{t}\phi$.

\section{Background}
\label{sec:defs}
We present the family of multiagent justification logics from \cite{Achilleos2014EUMAS}, its semantics and $*$-calculus, and notation we will be using. The definitions and propositions in this section can be found in \cite{Achilleos2014CLIMA,Achilleos2014EUMAS}.

\subsection{Syntax and Axioms}

The justification terms of the language $L_n$  include constants $c_1, c_2, c_3, \ldots$ and variables $x_1, x_2, x_3, \ldots$ and $t::= x\mid c\mid [t + t]\mid [t\cdot t]\mid ! t$. The set of terms 
is called
$Tm$. 
The $n$ agents are represented by the positive integers $i \in N=\{1,\ldots,n\}$. 
The propositional variables will usually (but not always, as will be evident in the following section) be $p_1,p_2,\ldots$.
Formulas of the language $L_n$ are defined: $\phi ::= \bot \mid p\mid \neg \phi \mid \phi \rightarrow \phi \mid \phi \wedge \phi \mid \phi \vee \phi \mid \term{t}{i}\phi $, but depending on convenience we may treat some connectives as constructed from others. We are particularly interested in 
$rL_n = \{ \term{t}{i}\phi \in L_n \}$.
Intuitively, $\cdot$ applies a justification for a statement $A \rightarrow B$ to a justification for $A$ and gives a justification for $B$. Using $+$ we can combine two justifications and have a justification for anything that can be justified by any of the two initial terms -- much like the concatenation of two proofs. 
Finally, $!$ is a unary operator called the proof checker. Given a justification $t$ for $\phi$, $!t$ justifies the fact that $t$ is a justification for $\phi$.

If $\subset, \ver$ are binary relations on the agent set $\{1,\ldots ,n \}$ and for every agent $i$, $F(i)$ is a (single-agent) justification logic (we assume  $F(i)\in \{\j,\jd,\jt\}$), then  justification logic  $J = (n,\subset,\ver,F)_{\cs}$ has the axioms as seen on Table \ref{table:axioms} and modus ponens. The binary relations $\subset, \ver$ determine the interactions among the agents: $\subset$ determines the instances of the Conversion axiom, while $ \ver$ the instances of the Verification axiom, so if $i \subset i$, then the justifications of agent $j$ are also valid justifications for agent $i$ (i.e. we have axiom $\term{t}{j}\phi \rightarrow  \term{t}{i}\phi$), while if $i \ver i$, then the justifications of agent $j$ can be \emph{verified} by agent $i$ (i.e. we have axiom $\term{t}{j}\phi \rightarrow  \term{!t}{i}\term{t}{j}\phi$). 
$F$ assigns a single-agent justification logic to each agent. 
We would assume $F(i)$ is one of \j, \jd, \jt, \jf, \jdf, and \lp,
but since Positive introspection is a special case of Verification, we can limit the choices for $F(i)$ to logics without Positive Introspection (i.e. \j, \jd, and \jt).
$\cs$ is called a constant specification. It introduces justifications for the axioms and is explained in Table \ref{table:axioms} together with the axioms. 
We also define $i \supset j$ iff $j \subset i$ and $i \inver j$ iff $j \ver i$.

\begin{table}[t]
	General axioms (for every agent $i$):
\begin{description}
	\item[Propositional Axioms:] Finitely many schemes of classical propositional logic;
	\item[Application:] $\term{s}{i}(\phi\rightarrow \psi) \rightarrow (\term{t}{i}\phi \rightarrow \term{[s\cdot t]}{i} \psi)$;
	\item[Concatenation:] 
	$\term{s}{i}\phi \rightarrow \term{[s + t]}{i} \phi$, 
	$\term{s}{i}\phi \rightarrow \term{[t + s]}{i} \phi$.
\end{description}
	Agent-dependent axioms (depending on $F(i)$):
\begin{description}
	\item[Factivity:] for every agent $i$, such that $F(i) 
	=
	\jt
	$, 
	$\term{t}{i}\phi \rightarrow \phi$;
	\item[Consistency:] for every agent $i$, such that $F(i) 
	=
	\jd
	$, $\term{t}{i}\bot \rightarrow \bot$.
\end{description}
Interaction axioms (depending on the binary relations $\subset$ and $\ver$): 
\begin{description}
	\item[Conversion:] for every $i \supset j$, $\term{t}{i}\phi \rightarrow  \term{t}{j}\phi$;
	\item[Verification:] for every $i \inver j$, $\term{t}{i}\phi \rightarrow  \term{! t}{j} \term{t}{i}\phi$.
\end{description}
A constant specification for $(n,\subset,\ver,F)$ is any set of formulas of the form $\term{c}{i}A$, where $c$ a justification constant, $i$ an agent, and $A$ an axiom of the logic from the ones above.
We say that axiom $A$ is justified by a constant $c$ for agent $i$ when $\term{c}{i}A \in \cs$.
\begin{description}
	\item[Axiom Necessitation (AN):] $\term{t}{i}\phi$, where 
	either 
	$\term{t}{i}\phi \in \cs$ or $t=!s$ and $\phi = \term{s}{j}\psi$ an instance of Axiom Necessitation.
\end{description}
\label{table:axioms}
\caption{The axioms of $(n,\subset,\ver,F)_{\cs}$}
\end{table}
In this paper we will be making the assumption that the constant specifications are 
\emph{axiomatically appropriate:}
each axiom is justified by at least one constant; and
\emph{schematic:} 
every constant justifies only a certain number (0 or more) of the logic's axiom schemes 
(Table \ref{table:axioms}) 
-- as a result, every constant justifies a finite number of axiom schemes, but either 0 or infinite axioms, while if $c$ justifies $A$ for $i$ and $B$ results from $A$ and substitution, then $c$ justifies $B$ for $i$. 

\paragraph{We use the following notation and conventions:}
For justification terms $\manyk{t}$ and formulas $\manyk{\phi}$, term  $[t_1+t_2+\cdots + t_k]$ is defined as $ [[t_1 + t_2 + \cdots + t_{k-1}] + t_k ]$,  $[t_1 \cdot t_2  \cdots t_k]$ is defined as  $ [[t_1 \cdot t_2  \cdots  t_{k-1}] \cdot t_k ]$, and $(\phi_1\wedge \phi_2\wedge\cdots \wedge \phi_k)$  as $ ((\phi_1 \wedge \phi_2 \wedge \cdots \wedge \phi_{k-1}) \wedge \phi_k )$ when $k>2$. 
We often identify conjunctions of formulas with sets of such formulas, as long as these can be used interchangeably.
For set of indexes $A$ and $\Phi = \{\term{t_a}{i_a}\phi_a \mid a \in A  \}$, we define	 $\Phi^{\#_i} = \{\phi_a \mid a \in A, i_a = i \}$ and $*\Phi = \{*_{i_a}(t_a,\phi_a) \mid a \in A  \}$.
Often we identify $0,1$ with $\bot, \top$ respectively, as long as it is not a source of confusion. 
%
%
%
%
\begin{lemma}[Internalization Property, \cite{Achilleos2014EUMAS}, but originally \cite{Art01BSL}]\label{internalization}
	For an axiomatically appropriate constant specification $\cs$,  
	if $\vdash \phi$, then for any $i \in N$ there is some term $t$ such that for any $\phi'$ substitution instance of $\phi$,  $\vdash \term{t}{i}\phi'$.
\end{lemma}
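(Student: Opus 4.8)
The plan is to induct on the structure of a \emph{fixed} derivation of $\phi$ in the Hilbert system, proving the slightly stronger statement that, for a fixed agent $i$, one can read off from the derivation a single term $t$ (independent of the substitution) such that $\vdash \term{t}{i}\sigma\phi$ holds for \emph{every} substitution $\sigma$; letting $\sigma$ range over all substitutions then yields the claim for all instances $\phi'$ of $\phi$. The reason for strengthening the induction hypothesis in this way is that a substitution instance $\phi'$ of $\phi$ is obtained by applying $\sigma$ uniformly to the whole derivation, so the real content of the lemma is that the witnessing term can be constructed to be the same for all $\sigma$.

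For the base cases I would argue as follows. If $\phi$ is an instance of an axiom scheme, then since $\cs$ is axiomatically appropriate there is a constant $c$ justifying that scheme for agent $i$; because $\cs$ is schematic, $c$ justifies every substitution instance of the scheme, so $\term{c}{i}\sigma\phi \in \cs$ for all $\sigma$ and $\term{c}{i}\sigma\phi$ is derivable by \textbf{AN}. Hence $t=c$ works. If $\phi = \term{s}{j}\psi$ was itself introduced by \textbf{AN}, then \textbf{AN} instances are closed under substitution (using schematicity at the level of the underlying $\cs$ membership), so $\sigma\phi$ is again an \textbf{AN} instance; since the leading term $s$ is untouched by the propositional substitution $\sigma$, the clause ``$t=!s$ and $\phi=\term{s}{j}\psi$'' makes $\term{!s}{i}\sigma\phi$ an \textbf{AN} instance as well, so $t=!s$ works.

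The inductive step is modus ponens: suppose $\phi$ is obtained from $\chi \to \phi$ and $\chi$. Applying the strengthened induction hypothesis to the two subderivations produces terms $t_1,t_2$ with $\vdash \term{t_1}{i}\sigma(\chi\to\phi)$ and $\vdash \term{t_2}{i}\sigma\chi$ for every $\sigma$. Since $\sigma(\chi\to\phi) = (\sigma\chi \to \sigma\phi)$, the \textbf{Application} axiom $\term{t_1}{i}(\sigma\chi\to\sigma\phi) \to (\term{t_2}{i}\sigma\chi \to \term{[t_1\cdot t_2]}{i}\sigma\phi)$ together with two uses of modus ponens gives $\vdash \term{[t_1\cdot t_2]}{i}\sigma\phi$, so $t=[t_1\cdot t_2]$ works, again uniformly in $\sigma$.

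The main thing to get right — the ``obstacle'', such as it is — is precisely this uniformity in $\sigma$: at each step the constructed term must not depend on the chosen substitution. This is guaranteed because the schematic assumption makes both the axiom instances and the \textbf{AN} instances closed under substitution with the justifying constant or term unchanged, and because the \textbf{Application} axiom combines the subterms in a way that is completely insensitive to which propositional formulas have been substituted in. I would also note that the whole construction stays within agent $i$'s justifications, so no interaction axioms are invoked and the \textbf{Concatenation} operator is never used; the only place the choice of $i$ matters is in demanding, via axiomatic appropriateness, a constant justifying each axiom scheme for that particular agent.
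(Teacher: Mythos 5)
Your proof is correct and takes essentially the same route as the paper's own argument, which is only a quick sketch: induction on the derivation of $\phi$, with \textbf{AN} (via axiomatic appropriateness) covering axioms and the Application axiom covering modus ponens. Your explicit attention to uniformity of $t$ in the substitution $\sigma$ --- which uses schematicity of $\cs$, a standing assumption of the paper even though the lemma's statement mentions only axiomatic appropriateness --- is precisely the detail the paper's sketch leaves implicit, so it is a faithful elaboration rather than a different approach.
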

\begin{proof}[Quick sketch]
	By induction on the proof of $\phi$: easy by AN if $\phi$ is an axiom and using the application axiom if $\phi$ is the result of 
	modus ponens.
\qed
\end{proof}

The Internalization Property demonstrates three important points. One is that a theorem's proof 
can be internalized as a justification for that theorem. Another point is that Modal Logic's Necessitation rule survives in Justification Logic -- in a weakened form as an axiom and in its full form as a property of the logic. The third point is the importance of the assumption that the constant specification is axiomatically appropriate as it is necessary for the lemma's proof.

\subsection{Semantics}

We present Fitting (F-) models for $J=(n,\subset,\ver,F)_{\cs}$. These are  Kripke models with an additional machinery (an admissible evidence function) to accommodate justification terms. They were introduced  by Fitting in \cite{Fit05APAL} with  variations appearing in \cite{Pac05PLS,DBLP:conf/csr/Kuznets08}.

\begin{definition}
	An F-model $\M$ for $J$ is a quadruple $(W, (R_i)_{i \in N}, (\A_i)_{i \in N},\V)$, where $W \neq \emptyset$ is a set, for every $i\in N$, $R_i\subseteq W^2$ is a binary relation on $W$, $\V:Pvar \To 2^{W}$ and for every $i\in N$, $\A_i:(Tm\times L_n) \To 2^{W}$. $W$ is called the \emph{universe} of $\M$ and its elements are the worlds or states of the model. $\V$ assigns a subset of $W$ to each propositional variable, $p$, and $\A_i$ assigns a subset of $W$ to each pair of a justification term and a formula. $(\E_{i})_{i\in N}$ is often seen and referred to as $\A : N\times Tm \times L_n \To 2^{W}$ and $\E$ is called an admissible evidence function (aef). 
	Additionally, for any $i\in N$, formulas $\phi, \psi$, and justification terms $t, s$, $\A$ and $(R_i)_{i\in N}$ must satisfy the following conditions:
	\begin{description}
		\item {Application closure:} 
		$ \A_i(s,\phi \rightarrow \psi) \cap \A_i(t,\phi) \subseteq \A_i(s\cdot t, \psi). $
		\item {Sum closure:} 
		$ \A_i(t,\phi) \cup \A_i(s,\phi) \subseteq \A_i(t+s,\phi).$
		\item {AN-closure:} for any 
		instance of AN, $\term{t}{i}\phi$, $\A_i(t,\phi) = W$.
		\item {Verification Closure:} If $i \ver j$, then $\A_j(t,\phi) \subseteq \A_i(!t,\term{t}{i}\phi)$
		\item {Conversion Closure:} If $i \subset j$, then $\A_j(t,\phi) \subseteq \A_i(t,\phi)$ 
		\item {Distribution:} 
		for $j \ver i$ and $a,b \in W$, if $a R_j b$ and $a \in \A_i(t,\phi)$, then $b \in \A_i(t,\phi)$.\footnote{If we have $\M,a \models \term{t}{i}\phi$ -- and thus $a \in \A_i(t,\phi)$ -- we also want $\M,a \models \term{!t}{j}\term{t}{i}\phi$ to happen and therefore also $\M,b \models \term{t}{i}\phi$ -- so $b \in \E_i(t,\phi)$ must be the case as well.}
	\end{description}
	\begin{itemize}
		\item If $F(i)=\jt$, then $R_i$ must be reflexive.
		\item If $F(i)=\jd$, then $R_i$ must be serial ($\forall a \in W \ \exists b \in W \ a R_i b$).
		\item If $i \ver j$, then for any $a,b,c\in W$, if $a R_i b R_j c$, we also have $a R_j c$.\footnote{Thus, if $i$ has positive introspection (i.e. $i \ver i$), then $R_i$ is transitive.}
		\item For any $i \subset j$, $R_i \subseteq R_j$.
	\end{itemize}
	Truth in the model is defined in the following way, given a state $a$:
	\begin{itemize}
		\item $\M,a \not \models \bot$ and if $p$ is a propositional variable, then $\M,a \models p$ iff $a \in \V(p)$.
		\item 
		$\M,a \models \phi \rightarrow \psi$ if and only if $\M,a \models \psi$, or $\M,a \not \models \phi$.
		\item 
		$\M,a \models \term{t}{i}\phi$ if and only if $a \in \A_i(t,\phi)$ and $\M,b\models \phi$ for all 
		$a R_i b$.
	\end{itemize}
\end{definition}

A formula $\phi$ is called satisfiable if there are  $\M,a \models \phi$; we then say that $\M$ satisfies $\phi$ in $a$.		
A pair $(W,(R_i)_{i\in N})$ as above is a frame for $(n,\subset,\ver,F)_{\cs}$. We say that $\M$ has the 
\emph{Strong Evidence Property} when $\M,a \models \term{t}{i}\phi$ iff $a \in \A_i(t,\phi)$.
%
		$J$ is sound and complete with respect to its F-models;\footnote{That $\cs$ is axiomatically appropriate
		is a requirement for completeness.} it is also complete with respect to F-models with the 
		Strong Evidence property. Furthermore, $J$ has a ``small'' model property, as Proposition \ref{cor:exponentialstates} demonstrates. Completeness is proven in \cite{Achilleos2014CLIMA,Achilleos2014EUMAS} by a canonical model construction with maximally consistent sets of formulas as states; Proposition \ref{cor:exponentialstates} is then proven by a modification of that canonical model construction that depends on the particular satisfiable formula $\phi$.

	\begin{proposition}[\cite{Achilleos2014CLIMA,Achilleos2014EUMAS}] \label{cor:exponentialstates}
		If $\phi$ is $J$-satisfiable, then $\phi$ is satisfiable by an F-model for $J$ of at most $2^{|\phi|}$ states which has the strong evidence property.
	\end{proposition}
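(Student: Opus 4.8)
The plan is to obtain the model by \emph{filtering} the canonical model through the subformulas of $\phi$. Since $\phi$ is $J$-satisfiable it is $J$-consistent, so by the completeness of $J$ with respect to its canonical F-model $\M^c=(W^c,(R_i^c)_{i\in N},(\A^c_i)_{i\in N},\V^c)$ -- whose worlds are the maximal consistent sets and in which $\M^c,\Gamma\models\psi$ iff $\psi\in\Gamma$ -- there is a world $\Gamma_0$ with $\phi\in\Gamma_0$. Let $\Phi$ be the set of subformulas of $\phi$, so $|\Phi|\le|\phi|$. Declare $\Gamma\equiv\Delta$ iff $\Gamma\cap\Phi=\Delta\cap\Phi$, let the universe of the new model $\M$ be $W=W^c/\!\equiv$, and write $[\Gamma]$ for the class of $\Gamma$. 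Each class is determined by a subset of $\Phi$, so $|W|\le 2^{|\Phi|}\le 2^{|\phi|}$, which is exactly the required bound.

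Next I would equip $\M$ with the remaining data. Set $\V(p)=\{[\Gamma]:p\in\Gamma\}$ for $p\in\Phi$ (and arbitrarily otherwise). For the accessibility relations take a fine filtration, e.g.\ $[\Gamma]\,R_i\,[\Delta]$ iff for every $\term{t}{i}\psi\in\Phi$ the membership $\term{t}{i}\psi\in\Gamma$ implies $\psi\in\Delta$ -- adjusted so as to force reflexivity of $R_i$ when $F(i)=\jt$, seriality when $F(i)=\jd$, the inclusion $R_i\subseteq R_j$ for $i\subset j$, and the composition condition $a R_i b R_j c\Rightarrow a R_j c$ for $i\ver j$. For the admissible evidence function I set $[\Gamma]\in\A_i(t,\psi)$ iff $\term{t}{i}\psi\in\Gamma$ whenever $\term{t}{i}\psi\in\Phi$ (well defined precisely because then $\term{t}{i}\psi\in\Phi$ is preserved along $\equiv$), and for the remaining pairs $(t,\psi)$ let $\A_i$ be the least function extending these values that satisfies Application-, Sum-, AN-, Verification-, Conversion-closure and Distribution.

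The core is then a truth lemma: for every $\psi\in\Phi$ and every world $[\Gamma]$, $\M,[\Gamma]\models\psi$ iff $\psi\in\Gamma$, proved by induction on $\psi$. The Boolean cases are immediate from the definitions of $\equiv$ and $\V$; for $\psi=\term{t}{i}\chi$ (with $\chi\in\Phi$, since $\chi$ is a subformula) the two directions use that $[\Gamma]\in\A_i(t,\chi)$ iff $\term{t}{i}\chi\in\Gamma$, together with the filtration definition of $R_i$, which guarantees that every $R_i$-successor of $[\Gamma]$ contains $\chi$ exactly when $\term{t}{i}\chi\in\Gamma$. Applying the lemma at $[\Gamma_0]$ yields $\M,[\Gamma_0]\models\phi$, and the same equivalence delivers the Strong Evidence Property on the pairs that matter for truth.

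The main obstacle is the admissible evidence function: the finite worlds record only which members of $\Phi$ they contain, yet $\A_i$ must satisfy all six closure conditions for \emph{all} terms and formulas, including pairs $(t,\psi)$ with $\term{t}{i}\psi\notin\Phi$ whose status is not fixed by the $\equiv$-classes. The delicate point is to verify that closing under the conditions -- especially Distribution and Verification-closure, which entangle $\A_i$ with the filtered relations $R_i$ -- generates no \emph{spurious} membership $[\Gamma]\in\A_i(t,\psi)$ for a pair with $\term{t}{i}\psi\in\Phi$ that would break the truth lemma or strong evidence; this holds because the canonical $\A^c_i$ already satisfies every condition and the filtration was arranged to respect exactly the subformulas that drive those closures. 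Checking that the frame conditions (reflexivity, seriality, the agent-composition condition, and $R_i\subseteq R_j$) survive the filtration is then routine given the construction of the $R_i$ above.
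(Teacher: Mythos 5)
Your construction breaks at exactly the point you flag and then wave away: the claim that the least closure of the seed values creates no spurious memberships on $\Phi$-pairs. The justification offered --- that the canonical $\A^c$ already satisfies every closure condition --- does not transfer, because Distribution is indexed by the \emph{filtered} relations, and the fine filtration admits $R_j$-edges with no canonical counterpart: in the canonical model, evidence moves along $R^c_j$ only via the Verification axiom, i.e.\ through $\term{!t}{j}\term{t}{i}\chi$, which is in general \emph{not} a subformula of $\phi$, so $\equiv$ and your edge definition are blind to it. Concretely, in any logic with $1 \inver 2$ (e.g.\ $J_H$), take $\phi = \term{s}{1}p \wedge \neg\term{t}{1}p$ with $s \neq t$ justification variables; $\phi$ is satisfiable, and $\Phi$ contains no formula $\term{u}{2}\psi$, so your $R_2$ is vacuously the total relation on $W^c/\!\equiv$. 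Pick maximal consistent $\Gamma \ni \term{t}{1}p$ and $\Gamma_0 \supseteq \{\term{s}{1}p, \neg\term{t}{1}p\}$. The seed gives $[\Gamma] \in \A_1(t,p)$; the edge $[\Gamma]\,R_2\,[\Gamma_0]$ (which nothing in $\Phi$ forbids) plus Distribution for $2 \ver 1$ force $[\Gamma_0] \in \A_1(t,p)$; and $\term{s}{1}p \in \Gamma_0 \cap \Phi$ forces every $R_1$-successor of $[\Gamma_0]$ to contain, hence by your base case satisfy, $p$. So $\M,[\Gamma_0] \models \term{t}{1}p$: the truth lemma fails at the very world meant to witness $\phi$, and the output model does not satisfy $\phi$.

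Two further steps you call routine are in fact the crux. First, the fine filtration does not satisfy the frame conditions: for $i \subset j$, verifying $R_i \subseteq R_j$ would need $\term{u}{i}\chi \in \Phi$ whenever $\term{u}{j}\chi \in \Phi$, which fails; the composition condition $a R_i b R_j c \Rightarrow a R_j c$ fails for the same reason naive filtration fails for \kf; and in the example above the total $R_2$ already violates $a R_2 b R_1 c \Rightarrow a R_1 c$. Repairing these by closing the relations only enlarges them and re-opens both Distribution and the boxed cases of the truth lemma, while shrinking them needs a principled, Lemmon-style definition that carries the relevant boxed formulas forward --- which is precisely the missing idea. Second, the Proposition asserts the strong evidence property for \emph{all} pairs $(t,\psi)$, not merely ``the pairs that matter for truth'': dropping $\term{s}{1}p$ from $\Gamma_0$ in the example yields a closure-generated membership $[\Gamma_0] \in \A_1(t,p)$ at a world some of whose $R_1$-successors falsify $p$, so strong evidence fails outright. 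This is why the proofs the paper cites do not take an unconstrained closure of a filtration: the $\phi$-dependent modification of the canonical construction defines the relations so that the formulas driving Distribution, Conversion and Verification are respected, and pins down the aef by derivability (the $*$-calculus over the finite frame) rather than by a free least fixed point.
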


	\subsection{The $*$-calculus.}
	\label{sec:starcalc}
	
	The $*$-calculus gives an axiomatization of $rJ=\{ \phi \in rL_n \mid J \vdash \phi \}$, the reflected fragment of $J$. It is an invaluable tool in the study of the complexity of Justification Logic and when we handle 
	aefs
	and formulas in $rL_n$. 
	A $*$-calculus was  introduced in \cite{NKru06TCS}, but its origins can be found in \cite{Mkr97LFCS}.

	If $t$ is a term, $\phi$ is a formula, and $i\in N$, then $*_i(t,\phi)$ is a $*$-expression.
	Given a frame $\mathcal{F} = (W,(R_i)_{i\in N})$ for $J$, the $*^\F$-calculus for $J$ is the 
	derivation system on $*$-expressions prefixed by states from $W$ 
	($*^\F$-expressions from now on) with the axioms and rules that are shown in Table \ref{tab:starcalc}.
	
	\begin{table}
		\begin{tabular}{|c|c|}\hline
		\begin{minipage}{0.5\linewidth}
		\begin{description}
		\item[$*\mathcal{CS}(\mathcal{F})$ Axioms:]  $w \ *_i(t,\phi)$, where $\term{t}{i}\phi$ an instance of AN
		\vspace{2ex}
		\item[$*$App$(\F)$:]
		\[ 
		\inferrule*
		{w \ *_i(s,\phi\rightarrow \psi) \\ w \ *_i(t,\phi)}{w \ *_i(s\cdot t , \psi) }
		\] 
		\item[$*$Sum$(\mathcal{F})$:] \[ 
		\inferrule*
		{w \ *_i(t,\phi) }{w \  *_i(s+t,\phi) } \qquad
		\inferrule*
		{w \ *_i(s,\phi) }{w \  *_i(s+t,\phi) }
		\] 
		\end{description}
		\end{minipage}
		&
		\begin{minipage}{0.5\linewidth}
		\begin{description}	
		\item[$* \ver (\mathcal{F})$:] For any $i \inver j$, \[ 
		\inferrule*
		{w \ *_i(t,\phi) }{w \  *_j(! t,\term{t}{i}\phi) } 
		\] 
		\item[$*\subset(\mathcal{F})$:] For any $i \supset j$, \[ 
		\inferrule*
		{w \ *_i(t,\phi) }{w \  *_j(t,\phi) } 
		\] 
		\item[$* \ver $Dis$(\mathcal{F})$:] For 
		any 
		$i \inver j$,
		 $(a,b)\in R_j$, \[ 
		\inferrule*
		{a \ *_i(t,\phi) }{b \  *_i(t,\phi) } 
		\] 	
		\end{description}
		\end{minipage}
		\\ \hline
		\end{tabular}
		\label{tab:starcalc}
		\caption{The $*^\F$-calculus for $J$: where $\mathcal{F} = (W,(R_i)_{i\in N})$ and for every $i \in N$}\label{fig:starcalc}
	\end{table}
	For $\Phi \subseteq rL_n$, the $*$-calculus (without a frame) for $J$ can be defined as $\Phi \vdash_* e$ if for every frame $\F$, state $w$ of $\F$, $\{w\ e \mid e\in *\Phi\} \vdash_{*^\F} w\ *_i(t,\phi)$. Notice that for any $v, w$, if $\{w\ e \mid e\in *\Phi\} \vdash_{*^\F} v\ *_i(t,\phi)$, then $\{w\ e \mid e\in *\Phi\} \vdash_{*^\F} w\ *_i(t,\phi)$, therefore the $*$-calculus is the resulting calculus on $*$-expressions after we ignore the frame and world-prefixes (and thus rule $*\ver$Dis($\F$)) in Table \ref{tab:starcalc}. 
	For an 
	aef
	$\E$, we write $\E \models w\ *_i(t,\phi)$ when $w \in \E_i(t,\phi)$; for set $\Phi$ of $*^\F$- (or $*$-)expressions, $\E \models \Phi$ when $\E \models e$ for every $e \in \Phi$. If $\E \models e$, we may say that $\E$ satisfies $e$.
	\begin{proposition} [\cite{Achilleos2014EUMAS}, but originally \cite{NKru06TCS,Kuz08PhD}]
		\label{prp:proofbystarcalc}
		\begin{enumerate}
		\item
		Let $\Phi \subseteq rL_n$.
		Then, 
		$*\Phi \vdash_{*} e$ iff
			for any 
			aef
			$\E \models *\Phi$, 
			$\E \models w\ e$ .
			\item For  frame $\F$, set of $*^\F$-expressions $\Phi$, $\Phi \vdash_{*^\F} e$ iff $\E \models e$ for every 
			aef
			$\E
			\models \Phi$.
		\end{enumerate}
	\end{proposition}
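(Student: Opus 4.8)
The plan is to prove Proposition \ref{prp:proofbystarcalc} by establishing a tight correspondence between derivability in the $*$-calculus and satisfaction across all admissible evidence functions, treating the two parts essentially as a frame-free and a frame-relative version of the same soundness-and-completeness argument. For the soundness direction of each part (left-to-right), I would argue by induction on the length of the $*^\F$-derivation. The base case handles the $*\cs(\F)$ axioms: if $*_i(t,\phi)$ comes from an instance of AN, then by the AN-closure condition on aefs we have $\A_i(t,\phi)=W$, so every aef $\E$ satisfies $w\ *_i(t,\phi)$ at every world. The inductive step checks that each rule of Table \ref{tab:starcalc} is validated by the corresponding closure condition in the definition of an aef: $*$App$(\F)$ matches Application closure, $*$Sum$(\F)$ matches Sum closure, $*\ver(\F)$ matches Verification Closure, $*\subset(\F)$ matches Conversion Closure, and crucially $*\ver$Dis$(\F)$ matches the Distribution condition (this last rule is exactly the one that carries world-prefix information and is dropped when we pass to the frame-free calculus). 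Since every rule preserves the property ``$\E$ satisfies the premises implies $\E$ satisfies the conclusion,'' any aef satisfying all of $\Phi$ (prefixed appropriately) must satisfy the derived expression.

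For the completeness direction (right-to-left) I would prove the contrapositive via a canonical-aef construction. Given $\Phi$, define the aef $\E^\Phi$ by setting $w \in \E^\Phi_i(t,\phi)$ if and only if the prefixed expression $w\ *_i(t,\phi)$ is derivable in the $*^\F$-calculus from $\{w'\ e \mid e \in \Phi\}$. The key lemma is that $\E^\Phi$ is a genuine aef, i.e. that it satisfies all six closure conditions: each closure condition holds precisely because the corresponding derivation rule is available in the calculus, so closure of the relation under the rules yields closure of the function. By construction $\E^\Phi \models \Phi$. Now if $\Phi \not\vdash_{*^\F} e$ with $e = w\ *_i(t,\phi)$, then by definition $w \notin \E^\Phi_i(t,\phi)$, so $\E^\Phi \not\models e$ while $\E^\Phi \models \Phi$, witnessing the failure of the semantic side. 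This establishes part~2.

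For part~1, I would reduce to part~2 using the observation already recorded in the text: because the rule $*\ver$Dis$(\F)$ is the only rule sensitive to world-prefixes and the frame, and because for any $v,w$ derivability of $v\ *_i(t,\phi)$ entails derivability of $w\ *_i(t,\phi)$, the frame-free relation $*\Phi \vdash_* e$ coincides with $*^\F$-derivability once we quantify over all frames and ignore prefixes. Concretely, ``$\E \models *\Phi$ implies $\E \models w\ e$ for every aef $\E$'' is the specialization of part~2 to the degenerate single-world frame (or, equivalently, the frame-independent fragment), where the Distribution condition becomes vacuous; the canonical construction of the previous paragraph then yields a frame-free aef separating $e$ from $\Phi$ whenever $*\Phi \not\vdash_* e$. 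I would make explicit that the prefix $w$ on the right-hand side of part~1 is immaterial by the prefix-insensitivity remark.

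The main obstacle I expect is verifying that the canonical aef $\E^\Phi$ genuinely satisfies the Distribution condition and the Verification Closure condition simultaneously, since these are the two conditions entangled with the frame structure and the interaction relations $\subset, \ver$. In particular one must check that the interplay between $*\ver$Dis$(\F)$ (which pushes $*$-expressions forward along $R_j$) and the frame constraints on the $R_i$ (such as $R_i \subseteq R_j$ for $i \subset j$ and the composition condition $a R_i b R_j c \Rightarrow a R_j c$ when $i \ver j$) does not create derivable expressions that violate admissibility; this amounts to confirming that the rule set is exactly calibrated to the aef conditions, with no gap in either direction. Everything else is a routine rule-by-rule matching that I would relegate to a table-driven case analysis rather than spell out in full.
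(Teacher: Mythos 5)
Your proposal is correct and takes essentially the same route as the paper: the paper's (very condensed) proof likewise defines a minimal aef $\E_m$ by ``$\E_m \models e$ iff $\Phi \vdash_{*^\F} e$'' (your $\E^\Phi$), observes that the calculus rules correspond exactly to the aef closure conditions so that $\E_m$ is a genuine aef satisfying $\Phi$, establishes the soundness direction by induction on the calculus derivation, and obtains part 1 as a direct consequence of part 2 via the prefix-insensitivity remark. The concern you flag about Distribution and Verification closure is indeed a non-issue, since admissibility consists precisely of the closure conditions mirrored by the rules, which is the whole content of the paper's one-line justification.
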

	\begin{proof}
		For 2, notice that the calculus rules correspond to the closure conditions of the
		aef, 
		so if $\E_m \models e$\footnote{$\E \models e$ has only been defined for 
		aefs, 
		but we slightly abuse the notation for convenience.} iff $\Phi \vdash_{*^\F} e$, then $\E_m$ is an 
	aef, so the ``if'' direction is established; by induction on the calculus derivation, we can also establish for every 
	aef
	$\E$, if $\E_m \models e$, then $\E \models e$. 1 is a direct consequence.
		\qed
	\end{proof}
	
	\begin{proposition}[\cite{Achilleos2014EUMAS}, but originally \cite{NKru06TCS,Kuz08PhD}]
		\label{thm:calccompnew}
		If  $\cs \in \P$ and is schematic, the following problems are in $\NP$:
		\begin{enumerate}
			\item
		Given a finite frame $\mathcal{F}$,
		a finite set $S \cup \{e\}$ of $*^\F$-expressions, 
		is it the case that $ S \vdash_{*^\F} e
		\mbox{?} $
		\item Given 
		a finite set $S \cup \{e\}$ of $*$-expressions, 
		is it the case that $ S \vdash_{*} e
		\text{?} $
	\end{enumerate}
	\end{proposition}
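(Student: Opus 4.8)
The plan is to show both problems are in $\NP$ by guessing a derivation and verifying it in polynomial time; the only subtlety is that derivations in the $*^\F$-calculus might, a priori, be too long (the rules build up larger terms and formulas), so the main work is to bound the size of a minimal derivation. First I would observe that the $*\mathcal{CS}(\F)$ axioms are decidable in polynomial time: because $\cs$ is schematic and in $\P$, checking whether a given $*$-expression $*_i(t,\phi)$ corresponds to an instance of AN amounts to checking either that $\term{t}{i}\phi \in \cs$ (a $\P$ test, after pattern-matching against the finitely many schemes justified by the constant appearing in $t$) or, in the iterated case $t = {!}s$ with $\phi = \term{s}{j}\psi$, recursing on the shorter expression $*_j(s,\psi)$. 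The recursion strips one $!$ at a time, so this whole axiom-membership test runs in polynomial time in the size of the expression.

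The core of the argument is a \emph{subformula/subterm property} for the calculus. I would argue that in any derivation of $S \vdash_{*^\F} e$, every $*$-expression that needs to appear can be taken from a polynomially bounded set determined by $S$ and $e$. The key observation is that each rule, read bottom-up, has premises whose terms and formulas are \emph{subterms/subformulas} of the conclusion: $*$App produces $*_i(s\cdot t,\psi)$ from pieces of $s\cdot t$ and of an implication whose consequent is $\psi$; $*$Sum produces $*_i(s+t,\phi)$ from a summand; $*\ver$ produces $*_j({!}t,\term{t}{i}\phi)$ from $*_i(t,\phi)$; and $*\subset$ and $*\ver$Dis leave the $*$-expression's term and formula untouched, changing only the agent or the world-prefix. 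Hence, if we fix the finite set of subterms of terms occurring in $S\cup\{e\}$ and the finite set of subformulas of formulas occurring there (closed under the agent annotations ranging over the $n$ agents, and — for the $*\ver$ rule — under prefixing a single $!$ and wrapping in one $\term{t}{i}{\cdot}$, which stays within the subformula closure of $e$), then every expression in a minimal derivation of $e$ lives in a set of polynomial size. Combined with the $O(|\F|^2)$ world-prefix pairs and the $n$ agents, the total number of distinct $*^\F$-expressions reachable is polynomial.

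Once the universe of relevant expressions is polynomially bounded, I would conclude as follows: a minimal derivation never repeats a derived expression, so its length is polynomial, and each derivation step is checkable in polynomial time (the hardest check being the axiom test above). Therefore a nondeterministic machine can guess the sequence of expressions together with the rule instance justifying each line, and verify the whole derivation in polynomial time, placing problem~1 in $\NP$. Problem~2 follows immediately: by the remark after Table~\ref{tab:starcalc}, the frameless $*$-calculus is obtained by discarding world-prefixes and the rule $*\ver$Dis$(\F)$, so $S \vdash_* e$ is the special case of problem~1 on a one-world frame with all relations empty (equivalently, one runs the same guess-and-verify procedure ignoring prefixes), and the same bounds apply.

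The step I expect to be the main obstacle is making the subterm/subformula bound fully rigorous for the $*\ver$ rule, since its conclusion $*_j({!}t,\term{t}{i}\phi)$ is strictly \emph{larger} than its premise $*_i(t,\phi)$. Read bottom-up this is harmless — the premise is smaller — but one must check that this rule cannot force the construction of expressions whose size grows without bound relative to $e$; concretely, I would verify that any ${!}t$-headed term or $\term{t}{i}{\cdot}$-headed formula needed in a minimal derivation already occurs as a subterm/subformula of $S\cup\{e\}$, so that the closure really is finite and of polynomial size. Handling the interaction of $*\ver$ with the AN-axiom recursion (where nested $!$'s appear legitimately) is where the bookkeeping is most delicate, and I would treat it by the same one-$!$-at-a-time recursion used for the axiom test.
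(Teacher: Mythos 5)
Your proposal has a genuine gap, and it sits exactly where you located the ``main obstacle'' --- but the obstacle is worse than you anticipate, and it is in $*$App, not $*\ver$. The subterm/subformula property you rely on is false for $*$App: the premise $w\ *_i(s,\phi\rightarrow\psi)$ contains an antecedent $\phi$ that does not occur in the conclusion $w\ *_i(s\cdot t,\psi)$ at all. That $\phi$ is constrained only by the other branch of the derivation, whose leaves may be $*\cs(\F)$ axioms; and since $\cs$ is \emph{schematic}, those axiom leaves are instances of schemes, instantiable by arbitrarily large formulas --- they are in no way confined to subformulas of $S\cup\{e\}$. Concretely, a constant justifying a scheme with a repeated metavariable (e.g.\ $\term{c}{i}(\phi_1\rightarrow(\phi_2\rightarrow\phi_1\wedge\phi_2))$) lets each application step roughly double the size of the formula being carried, so a minimal derivation of $*_i(t,\phi)$ can be forced to pass through intermediate formulas of size exponential in $|t|$. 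Your certificate --- a sequence of \emph{concrete} $*^\F$-expressions, one per line --- is then superpolynomially large, and the guess-and-verify argument does not place the problem in $\NP$. (Your bound on the \emph{number} of lines is actually fine, since the term $t$ of the goal records every application of $*$App, $*$Sum, and $*\ver$, and the remaining rules $*\subset$ and $*\ver$Dis change only the agent or world-prefix; it is the \emph{size of the formulas on the lines} that escapes control.)

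The paper's route (sketched right after the proposition, following Krupski) avoids this by never writing the intermediate formulas concretely: read the derivation skeleton off the parse tree of $t$ ($\cdot$, $+$, $!$ correspond to $*$App, $*$Sum, $*\ver$; nondeterministically interleave the polynomially many useful $*\subset$/$*\ver$Dis moves, which leave $(t,\phi)$ intact), nondeterministically assign to each leaf either a member of $S$ or an axiom \emph{scheme} from $\cs$, and then decide whether the root can be made to match the goal formula by \emph{unification} over the scheme metavariables. This is precisely where schematicness is used, and it is what your proof is missing: unification computes a most general unifier in polynomial time on a DAG (shared) representation, so the exponential tree-size blow-up above is harmless. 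Your polynomial-time test for the AN axioms (stripping one $!$ at a time, then a $\cs\in\P$ membership check) is correct and is also needed in the paper's argument, and your reduction of problem 2 to problem 1 by discarding prefixes and $*\ver$Dis matches the paper's remark; but as written, the central size bound your $\NP$ membership rests on is unsalvageable without replacing concrete derivations by schematic ones plus unification.
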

	The shape of a $*$-calculus derivation is mostly described by  $t$. We can use $t$ to extract the general shape of the derivation -- the term keeps track of the applications of all rules besides $*\subset$ and $* \ver $Dis. We can then plug in to the leaves of the derivation either axioms of the calculus or members of $S$ and unify ($\cs$ is schematic, so the derivation includes schemes) trying to reach the root. Using Propositions \ref{thm:calccompnew} and  \ref{cor:exponentialstates}, we can conclude with Corollary \ref{cor:NPandNEXPandP}.

	\begin{corollary}[\cite{Achilleos2014EUMAS}, but 1 was  originally proven in \cite{NKru06TCS}]\label{cor:NPandNEXPandP}
		\begin{enumerate}
			\item If  $\cs \in \P$ and is schematic, then deciding for $\term{t}{i}\phi$ that $J \vdash \term{t}{i}\phi$ is in \NP.
			\item 
			If  $\cs \in \P$ and is schematic and axiomatically appropriate, then the satisfiability problem for $J$ is in \NEXP.
		\end{enumerate}
	\end{corollary}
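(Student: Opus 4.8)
The plan is to prove both parts of Corollary~\ref{cor:NPandNEXPandP} by combining the decision procedure for the $*$-calculus (Proposition~\ref{thm:calccompnew}) with the small-model property (Proposition~\ref{cor:exponentialstates}). The key observation linking the two parts is that membership of a formula $\term{t}{i}\phi$ in the reflected fragment $rJ$ is exactly captured by $*$-derivability: by Propositions~\ref{prp:proofbystarcalc} and \ref{thm:calccompnew}, we have $J \vdash \term{t}{i}\phi$ iff $*\Phi \vdash_* *_i(t,\phi)$ for the appropriate set $\Phi$ coming from $\cs$ (in fact, for the validity case, iff $\emptyset \vdash_* *_i(t,\phi)$, since the relevant AN-instances are built into the $*\cs(\F)$ axioms).

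For part~1, I would argue as follows. Given an input $\term{t}{i}\phi$, by the soundness and completeness of the $*$-calculus for the reflected fragment (Proposition~\ref{prp:proofbystarcalc}(1)), deciding $J \vdash \term{t}{i}\phi$ reduces to deciding $\vdash_* *_i(t,\phi)$, i.e.\ $*$-derivability from the empty set of hypotheses (the constant-specification facts enter only through the $*\cs(\F)$ axioms of the calculus, which are available since $\cs \in \P$). This is precisely an instance of the decision problem in Proposition~\ref{thm:calccompnew}(2) with $S = \emptyset$, and under the hypotheses $\cs \in \P$ and $\cs$ schematic, that problem is in $\NP$. Hence part~1 follows immediately. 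The intuition in the remark preceding the corollary is the guide here: the term $t$ dictates the overall shape of the derivation, one guesses how to decorate the leaves with AN-axioms or schemes and verifies the unification in polynomial time, so the whole check is nondeterministic polynomial.

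For part~2, I would build a single nondeterministic exponential-time procedure. Given a formula $\phi$, by Proposition~\ref{cor:exponentialstates} (which needs $\cs$ axiomatically appropriate) it suffices to decide whether $\phi$ is satisfied in some F-model with the strong evidence property of at most $2^{|\phi|}$ states. The procedure first nondeterministically guesses such a model: a universe $W$ of size $\le 2^{|\phi|}$, the relations $(R_i)_{i\in N}$, a valuation on the propositional variables occurring in $\phi$, and a propositional truth assignment at each world to all subformulas of the form $\term{t}{j}\psi$ that occur in $\phi$. All of this has size exponential in $|\phi|$, so it can be written down in nondeterministic exponential time. One then verifies that the guessed frame meets the frame conditions for $J$ (reflexivity, seriality, the $\ver$- and $\subset$-induced conditions), that the propositional part of the assignment is consistent with the connectives and makes $\phi$ true at some world, and finally that the assignment respects the meaning of $\term{t}{j}\psi$: at each world $a$ we require $\M,a \models \term{t}{j}\psi$ to force $\M,b\models\psi$ for every $a R_j b$, and we require the aef membership conditions to hold.

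The main obstacle, and the place where the $*$-calculus does the real work, is verifying that a guessed aef is genuinely admissible and is consistent with the guessed truth assignment. Rather than store the aef explicitly (it is defined on all of $Tm \times L_n$, an infinite domain), I would exploit that only the finitely many relevant $*$-expressions matter, and that admissibility of the aef restricted to the relevant instances is decidable via the frame-based calculus: the statement ``$a \in \A_j(t,\psi)$'' is governed by $\vdash_{*^\F}$ from the guessed atomic $*^\F$-facts, which by Proposition~\ref{thm:calccompnew}(1) is checkable in $\NP$ relative to the exponential-size frame $\F$. Concretely, I would guess, for each world and each relevant pair $(t,\psi)$, whether it lies in the aef, then check with the $*^\F$-calculus that these guesses are closed under the calculus rules (and include the required AN-axioms), and that the strong evidence condition aligns these aef-memberships with the guessed modal truth values. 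Since each such check is an $\NP$ computation on an exponential-size object, the entire verification runs in nondeterministic exponential time, and part~2 follows. The delicate point to get right is that the number of relevant $*$-expressions is polynomial in $|\phi|$ while the frame is exponential, so the combined guess-and-verify stays within $\NEXP$.
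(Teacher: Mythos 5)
Part~1 of your proposal is correct and is exactly the paper's route: since the $*$-calculus axiomatizes the reflected fragment $rJ$, deciding $J \vdash \term{t}{i}\phi$ reduces to deciding $\emptyset \vdash_{*} *_i(t,\phi)$, which is an instance of Proposition~\ref{thm:calccompnew}(2). For part~2 you also follow the paper's intended architecture (guess a model of at most $2^{|\phi|}$ states with the strong evidence property by Proposition~\ref{cor:exponentialstates}, and delegate the admissible-evidence part to the $*^\F$-calculus of Proposition~\ref{thm:calccompnew}(1)), but your verification step has a genuine gap, in two places. First, checking that the guessed table of relevant aef-memberships is ``closed under the calculus rules'' is not a sound criterion: the premise of $*$App involves an arbitrary side formula (found by unification), so a derivation of a relevant $*^\F$-expression may pass through $*$-expressions entirely outside your finite table. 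The correct criterion --- the one the paper itself uses when describing Kuznets' algorithm --- is that no expression your truth assignment forces \emph{out} of the aef is $\vdash_{*^\F}$-derivable from the expressions it forces \emph{in}, the aef then being the minimal one generated by the positive facts (Proposition~\ref{prp:proofbystarcalc}(2)).

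Second, and more seriously, that criterion is a \emph{non}-derivability condition, i.e.\ the complement of the problem that Proposition~\ref{thm:calccompnew}(1) places in \NP. In the $\Sigma_2^p$ setting this costs nothing because the \NP-oracle answers negative queries as well, but a \NEXP\ machine has no oracle: ``an \NP\ computation on an exponential-size object'' can be absorbed into the existential guess only for the \emph{positive} direction, so your accounting does not close for the negative constraints, which are unavoidable (e.g.\ $\neg\term{x}{i}p$ at a world with no $R_i$-successors forces a genuine aef exclusion). The repair is to observe that these particular instances are decidable \emph{deterministically} in time $2^{poly(|\phi|)}$: a derivation of $w\ *_i(t,\psi)$ has its skeleton dictated by the term $t$, a subterm of $\phi$, hence polynomially many nodes and leaves; each leaf is decorated by one of the (exponentially many, since $\cs$ is schematic and in \P) hypotheses or axiom schemes, the $*\subset$/$*\ver$Dis moves reduce to reachability checks in the exponential-size frame, and unification is polynomial --- so exhaustive search over all $2^{poly(|\phi|)}$ decorated skeletons settles derivability, and thus non-derivability, within deterministic exponential time. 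With that observation (implicit in the paper's remark that ``the shape of a $*$-calculus derivation is mostly described by $t$'') your part~2 goes through; without it, the \NEXP\ bound is not established by the ingredients you cite.
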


\section{A Universal Lower Bound}
\label{sec:universal}

The main result of this section can be found in Theorem \ref{thm:S2lower} and is a lower bound for the complexity of $J$-satisfiability, for an arbitrary multiagent justification logic $J$, given an axiomatically appropriate, schematic constant specification. We give the theorem first and then its proof.

\begin{theorem}\label{thm:S2lower}
If $J$ has an axiomatically appropriate and schematic constant specification, then $J$-satisfiability is $\Sigma^p_2$-hard.	
\end{theorem}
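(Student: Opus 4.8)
The plan is to reduce a canonical $\Sigma_2^p$-complete problem to $J$-satisfiability. The standard target is the validity of a quantified Boolean formula of the form $\exists \vec{x}\, \forall \vec{y}\, \chi(\vec{x},\vec{y})$ with $\chi$ quantifier-free (equivalently, one can reduce from the complement and aim for $\Pi_2^p$, but since $\Sigma_2^p$ is closed under neither complementation nor its own hardness trivially, I would fix the direction so that a \emph{satisfiable} $J$-formula corresponds to a \emph{true} $\Sigma_2^p$-sentence). Following the remark in the introduction that this construction is ``a direct simplification of the more involved \NEXP-hardness reduction and very similar to Milnikel's method,'' the intended encoding uses justification terms to capture the universal quantifier: a single term must serve as evidence \emph{uniformly} across all valuations of $\vec{y}$, which is exactly the kind of second-order, $\forall$-like power that term formation affords and that a plain modal box does not.

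First I would introduce fresh propositional variables for the existentially quantified $\vec{x}$ and the universally quantified $\vec{y}$, and design a formula $\Theta$ whose satisfying models encode a choice of the $\vec{x}$-assignment (the existential, nondeterministic guess realized by the choice of a model/world) together with a justification term $t$ that witnesses, for \emph{every} assignment to $\vec{y}$, that $\chi$ holds. The key device is to exploit the \textbf{Application} and \textbf{Concatenation} closure on terms: given evidence for the matrix of $\chi$ under each literal choice, one builds up a compound term via $\cdot$ and $+$ so that a single term provably justifies $\chi$ regardless of how the $y_j$ are instantiated, and this is verifiable in \NP\ by Corollary \ref{cor:NPandNEXPandP}(1) via the $*$-calculus. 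The universal behavior over $\vec{y}$ is forced by demanding that the \emph{same} term $t$ justify the matrix under all truth assignments, so that the coNP content of checking all $\vec{y}$ is pushed into the single requirement $J \vdash \term{t}{i}\psi$ for an appropriate $\psi$, while the existential guess of $\vec{x}$ and of $t$ itself is the \NP\ outer layer — yielding the $\Sigma_2^p$ structure.

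The main step is the correctness argument in both directions. For soundness (if the QBF is true then $\Theta$ is satisfiable) I would, given a winning $\vec{x}$-assignment, explicitly construct an F-model — a single world or a small frame suffices since no genuine modal accessibility is needed for the quantifier-free core — and an admissible evidence function satisfying the required closure conditions, invoking Proposition \ref{prp:proofbystarcalc} to certify that the built term indeed receives evidence everywhere it must. For completeness (if $\Theta$ is satisfiable then the QBF is true) I would take a satisfying model with the strong evidence property (available by Proposition \ref{cor:exponentialstates}), read off the $\vec{x}$-assignment, and use the fact that the witnessing term $t$ justifies the matrix to argue that $\chi$ holds under \emph{every} $\vec{y}$-assignment, appealing again to the $*$-calculus characterization of which $*$-expressions are forced. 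Here the crucial point — and the reason the construction only needs an axiomatically appropriate and schematic $\cs$, \emph{not} a schematically injective one — is that the argument never requires a term to justify a \emph{unique} scheme; it only needs that the relevant axioms are justified by \emph{some} constants, so the extra evidence contributed by a non-injective specification is harmless.

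The hardest part will be arranging the term $t$ and the target formula $\psi$ so that the three standard assumptions suffice and so that the reduction is genuinely polynomial-time. Concretely, I must ensure that the compound term witnessing $\chi$ over all $2^{|\vec{y}|}$ assignments has polynomial size (so the exponential blow-up is absorbed into the term's \emph{semantic} coverage via the closure conditions, not its \emph{syntactic} length) and that no unintended $*$-calculus derivation lets a spurious term satisfy the requirement and thereby collapse the universal quantifier — i.e.\ the ``only if'' direction of completeness must genuinely force truth under all $\vec{y}$. Verifying that the extra evidence from a merely schematic (possibly non-injective) $\cs$ cannot produce such a spurious witness is precisely the improvement over Buss--Kuznets, and checking it carefully is where the real work lies.
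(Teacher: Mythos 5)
Your overall plan---reduce from $\exists\forall$-QBF so that satisfiability corresponds to truth, push the verification into the $*$-calculus, and observe that non-injectivity is harmless because no step requires a constant to justify a unique scheme---matches the paper's intent, but two concrete pieces are missing or wrong, and they are precisely the ideas the proof turns on. First, you encode the matrix with the original literals (``evidence for the matrix of $\chi$ under each literal choice''), which runs straight into the obstruction the construction is designed to avoid: for any agent with Factivity or Consistency (components of type \jt, \lp, \jd, \jdf), the universal gadget $\sterm{x}p \wedge \sterm{y}\neg p$ is \emph{inconsistent}, so your reduction collapses exactly for the logics where Milnikel's original method already failed. The paper's central device is to replace every subformula $\chi$ by two fresh propositional variables $[\chi]^\top$ and $[\chi]^\bot$, so that $\term{x_j}{i}[p_j]^\top \wedge \term{x_j}{i}[p_j]^\bot$ is consistent; the price is a family of justified truth-table formulas $Eval_j$ (internalizing, e.g., $[\gamma]^\bot \rightarrow [\neg\gamma]^\top$, which is no longer a tautology) and an explicitly constructed polynomial-size evaluation term $T^J(\phi)$ built from projection and append combinators. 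Without this doubling you do not get the theorem for all $J$ in the family, which is the whole point.

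Second, your mechanism for the universal layer is inverted. You propose to ``demand that the same term $t$ justify the matrix under all truth assignments,'' with $t$ existentially guessed alongside $\vec{x}$; but in a satisfiability reduction the term must be fixed inside the formula (the paper computes $T^J(\neg\phi)$ deterministically in polynomial time), and a \emph{positive} assertion $\term{t}{i}\psi$ carries no coNP content, since a model may simply enlarge the admissible evidence function to satisfy it. The paper instead asserts the \emph{negation} $\neg\term{T^J(\neg\phi)}{i}[\neg\phi]^\top$, together with conjunctions $\term{x_j}{i}[p_j]^\top \wedge \term{x_j}{i}[p_j]^\bot$ for universal variables and disjunctions $\term{x_j}{i}[p_j]^\top \vee \term{x_j}{i}[p_j]^\bot$ for existential ones; the quantifier alternation is then extracted by Lemma \ref{lem:universalchoicefor1variable}: because each justification variable occurs at most once in the $!$-free term $T^J(\neg\phi)$, derivability from the two-signed conjunction holds iff it holds from some single sign, so \emph{non}-derivability is equivalent to non-derivability under \emph{every} choice. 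Your closing worry about a polynomial-size term ``semantically covering'' all $2^{|\vec{y}|}$ assignments dissolves under this lemma---the term evaluates the matrix once per derivation, and the universal quantification lives in which premise the derivation is allowed to consume---but as written your proposal lacks both the consistency-restoring encoding and the single-occurrence lemma, so neither direction of its correctness argument can be carried out for a general $J$.
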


Kuznets proved in \cite{Kuz00CSL} that, under a schematic constant specification, satisfiability for \j, \jt, \jf, and \lp\ is in $\Sigma^p_2$ -- an upper bound which was also successfully established later for \jd\ \cite{Kuz09LC} and \jdf\ \cite{Achilleos2014JCSS} under the assumption of a schematic and axiomatically appropriate constant specification. In that regard, the lower bound of Theorem \ref{thm:S2lower} is optimal. Kuznets' algorithm is composed of a tableau procedure which analyzes signed formulas of the form $T\ \phi$, intuitively meaning that $\phi$ is true in the constructed model, and $F\ \phi$,  meaning that $\phi$ is false, with respect to their propositional connectives (and from $T\ \term{t}{i}\phi$ gives $T\ \phi$ in the presence of Factivity). Eventually it produces formulas of the form $T\ p$, $F\ p$, $T\ *(t,\phi)$, and $F\ *(t,\phi)$, where $T\ *(t,\phi)$ means that the aef of the constructed model makes $(t,\phi)$ \true. The tableau process so far takes polynomial time and makes nondeterministic choices to break the propositional connectives and construct a specific branch. Then we need to make sure that there is a model $(\E,\V)$ such that $\E(t,\phi) = \true$ if $T\ *(t,\phi)$ is in the branch, $\E(t,\phi) = \false$ if $F\ *(t,\phi)$ is in the branch, $\V(p) = \true$ if $T\ p$ is in the branch, and $\V(p) = \false$ if $F\ p$ is in the branch. The propositional variable part is easy to check -- just check that not both $T\ p$ and $F\ p$ are in the branch. The 
aef
part is harder to verify, but the branch 
can give a valid aef if and only if from all $*$-expressions $e$, where $T\ e$ is in the branch we cannot deduce some $*$-expression $f$ using the $*$-calculus, where $F\ f$ in the branch. By Proposition \ref{thm:calccompnew}, this can be verified using an \NP-oracle.

The idea behind the reduction we use to prove Theorem \ref{thm:S2lower} is very similar to Milnikel's proof of $\Pi_2^p$-completeness for \jf-provability \cite{Mil07APAL} (which also worked for \j-provability). Both Milnikel's and our reduction are from $QBF_2$.
 The main difference has to do with the way each reduction transforms (or not) the $QBF$ formula. 
Milnikel uses the propositional part of the $QBF$ formula as it is and he introduces existential nondeterministic choices on a satisfiability-testing procedure (think of Kuznets' algorithm as described above) using formulas of the form $\sterm{x}p \vee \sterm{y}\neg p$ and universal nondeterministic choices using formulas of the form $\sterm{x}p \wedge \sterm{y}\neg p$ and term $[x + y]$ in the final term, forcing a universal choice between $x$ and $y$ during the $*$-calculus testing. 

This approach works well for \j\ and \jf, but it fails in the presence of the Consistency or Factivity axiom, as $\sterm{x}p \wedge \sterm{y}\neg p$ becomes inconsistent. For the case of \lp, he used a different approach and made use of his assumption of a schematically injective constant specification (i.e. that all constants justify \emph{at most one} scheme) to construct a term $t$ to specify an \emph{intended} proof of a formula of the form $\bigwedge_i (\sterm{x}p \wedge \sterm{y}\neg p) \rightarrow \sterm{s}\psi$ -- which is always provable, since the left part of the implication is inconsistent. 
In this paper we bypass the problem of the inconsistency of $\sterm{x}p \wedge \sterm{y}\neg p$ by 
replacing each propositional formula 
by two corresponding propositional \emph{variables},
$[\chi]^\top$ and $[\chi]^\bot$ 
to correspond to ``$\chi$ is true'' and to ``$\chi$ is false'' respectively. Therefore, we use $\sterm{x}[p]^\top \wedge \sterm{y}[p]^\bot$ instead of $\sterm{x}p \wedge \sterm{y}\neg p$ and we have no inconsistent formulas. As a side-effect we need to use several extra formulas to encode the behavior of the formulas with respect to a truth-assignment -- for instance, $[p]^\top \rightarrow [p \vee q]^\top$ is not a tautology, so we need a formula to assert its truth (see the definitions of $Eval_j$ below).

Buss and Kuznets in \cite{newlower} use the same assumption as Milnikel on the constant specification to give a general lower bound by a reduction from Vertex Cover and a $\Sigma_2^p$-complete generalization of that problem. Their construction has the advantage that it additionally proves an \NP-hardness result for the reflected fragment of the logics they study, while ours does not. 
On the other hand we do not require a schematically injective constant specification, as, much like Milnikel's construction for \jf, we do not need to limit a $*$-calculus derivation.

Lemma \ref{lem:universalchoicefor1variable} is a simple observation on the resources (number of assumptions) used by a $*$-calculus derivation: if there is a derivation of $*_i(t,\phi)$ and $t$ only has one appearance of term $s$, then the derivation uses at most one premise of the form $*_j(s,\psi)$. In fact, this observation can be generalized to $k$ appearances of $s$ using at most $k$ premises, but this is not  important for the proof of Theorem \ref{thm:S2lower}.

\begin{lemma}\label{lem:universalchoicefor1variable}
	Let $i$ be an agent, $\phi$ a justification formula, $t$ a justification term in which $!$ does not appear, and $s$ a subterm of $t$ which appears at most once in $t$. Let $S_s = \{ \manyk{\term{s}{i}\phi} \}$ and $S \subset rL_n$, such that $S\cup S_s$ is consistent. 
	Then, $S \cup S_s \vdash \term{t}{i}\phi$ if and only if there is some $1 \leq a \leq k$ such that $S \cup \{\term{s}{i}\phi_a \} \vdash \term{t}{i}\phi$.
\end{lemma}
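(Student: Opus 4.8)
The plan is to prove both directions via the semantic characterization of the $*$-calculus given in Proposition~\ref{prp:proofbystarcalc}(1), which tells us that $*\Phi \vdash_* e$ holds exactly when every aef satisfying $*\Phi$ also satisfies $w\ e$. Since $S \cup S_s \subseteq rL_n$, provability of $\term{t}{i}\phi$ from $S\cup S_s$ in the full logic $J$ reduces, via the reflected fragment and the $*$-calculus, to the statement $*(S\cup S_s) \vdash_* *_i(t,\phi)$; I would first make this translation explicit, noting that $!$ does not appear in $t$ so the $*\ver$ and $*\ver$Dis rules play no role and the relevant derivation is frame-free.

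The ``if'' direction is the easy one: if $S\cup\{\term{s}{i}\phi_a\} \vdash \term{t}{i}\phi$ for some $a$, then since $\term{s}{i}\phi_a \in S_s$, monotonicity of derivability (more assumptions never hurt) immediately gives $S\cup S_s \vdash \term{t}{i}\phi$. First I would state this and dispatch it in a sentence.

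For the ``only if'' direction the key is the \emph{resource} observation hinted at in the paragraph preceding the lemma: because $s$ appears at most once in $t$, any $*$-calculus derivation of $*_i(t,\phi)$ can consume at most one leaf of the form $*_j(s,\psi)$. The plan is to induct on the structure of $t$, tracking how the term records the rule applications. The leaves of a derivation are either $*\cs$ axioms or members of $*(S\cup S_s)$; the subterm structure of $t$ dictates which $*$App and $*$Sum steps occur, and the single occurrence of $s$ in $t$ means exactly one subderivation can terminate in a premise whose term is $s$ — hence in at most one of the $\term{s}{i}\phi_1,\dots,\term{s}{i}\phi_k$. Concretely, I would argue that the derivation witnessing $S\cup S_s \vdash \term{t}{i}\phi$ uses at most one premise drawn from $S_s$; let $\term{s}{i}\phi_a$ be that premise (if none is used, then $S \vdash \term{t}{i}\phi$ and any $a$ works). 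Replaying the same derivation with $S\cup\{\term{s}{i}\phi_a\}$ as the assumption set then yields $S\cup\{\term{s}{i}\phi_a\}\vdash \term{t}{i}\phi$.

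The main obstacle is making the ``at most one leaf uses $s$'' claim rigorous, since the $*\subset$ rule (and, over a frame, $*\ver$Dis) can relabel agents and the term $s$ might reappear in the formula-component of a $*$-expression even though it appears once in $t$. I expect to handle this by a careful induction on $t$ that isolates the occurrences of $s$ as a \emph{subterm position}: the $*$App and $*$Sum rules combine the terms of their premises in lockstep with the term constructors in $t$, so each occurrence of $s$ as a subterm of $t$ corresponds to exactly one ``slot'' in the derivation tree where a premise of the form $*_j(s,\psi)$ may be inserted, and the agent shift from $i$ to $j$ under $*\subset$ does not create new occurrences of $s$ in the term. Because $s$ occupies exactly one such slot, only one premise from $S_s$ can be used. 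I would remark that the generalization to $k$ occurrences of $s$ (using at most $k$ premises) follows by the same counting argument, as noted before the lemma, but is not needed here.
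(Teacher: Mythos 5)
Your proposal is correct and takes essentially the same route as the paper, whose entire proof is the one-line ``Easy, by induction on the $*$-calculus derivation (on $t$)'': your resource-counting induction, with the $*$App/$*$Sum steps mirroring the term constructors of $t$ so that the single occurrence of $s$ yields at most one leaf drawn from $S_s$ (and $*\ver$ being excluded since no rule can erase a $!$ from a term), is precisely the intended argument. The one point to make explicit in a full write-up is the bridge between $J$-derivability from hypotheses in $rL_n$ and $*$-calculus derivability, which is exactly where the consistency assumption on $S \cup S_s$ is needed (an inconsistent set derives $\term{t}{i}\phi$ vacuously, while the $*$-calculus has no ex falso); the paper invokes this bridge implicitly in the same way.
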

\begin{proof}
	Easy, by induction on the $*$-calculus derivation (on $t$). \qed
\end{proof}

The proof of Theorem \ref{thm:S2lower} is by reduction from $QBF_2$, which is the following ($\Sigma_2^p$-complete) problem: given a Quantified Boolean Formula, $$\phi = \exists x_1 \exists x_2 \cdots \exists x_k \forall y_1 \forall y_2 \cdots \forall y_{k'} \psi,$$ where $\psi$ is a propositional formula on variables $\manyk{x},\many{y}{k'}$, is $\phi$ true? That is, are there truth-values for $\manyk{x}$, such that for all truth-values for $\many{y}{k'}$, a truth-assignment that gives these values makes $\psi$ true?

As mentioned above, 
for every $\psi_a \in \Psi$, let $[\psi_a]^{\top}, [\psi_a]^{\bot}$ be new propositional variables.
%
 As we argued earlier, 
 we need  
formulas to help us evaluate the truth of variables under a certain valuation in a way that matches the truth of the original formula, $\psi$ -- $[\psi]^\bot \rightarrow [\neg \psi]^\top$ for instance. These kinds of formulas (prefixed by a corresponding justification term) are gathered into $S(\phi)$. $T^J(\phi)$ is constructed in such a way that under the formulas of $S(\phi)$ and given a valuation $v$ 
$$ \bigwedge_{ v(p_a) = \true } \term{x_a}{i}[p_a]^\top  \wedge \bigwedge_{ v(p_a) = \false } \term{x_a}{i}[p_a]^\bot \wedge S(\phi) \vdash \term{T^J(\phi)}{i} [\phi]^\top$$
if and only if $v$ makes $\phi$ true. In other words, $T^J(\phi)$ encodes the method we would use to evaluate the truth value of $\phi$.

\paragraph{To construct $T^J(\phi)$, we first need certain justification terms to encode needed operations to manipulate formulas.} We will often need to work on long conjuncts like $(\phi_1 \wedge \cdots \wedge \phi_r)$, which we can view as a string of formulas. Therefore we need operations like projections ($proj_x^r$), appending a formula  ($append$), appending a formula to a hypothesis ($hypappend$), appending the conclusions of two implications ($appendconc$), and so on. We start by providing these terms. 

 We define terms $proj_x^r$ (for $x \leq r$), $append$,  $hypappend$, and $appendconc$, to be such that $$\term{t}{i}(\phi_1 \wedge \phi_2 \wedge \cdots \wedge \phi_r) \vdash \term{[proj^r_x \cdot t]}{i}\phi_x,$$ 
$$\term{t}{i}\phi_1, \term{s}{i}\phi_2 \vdash \term{ 	[append \cdot t\cdot  s]}{i}(\phi_1\wedge \phi_2),$$ 
$$\term{t}{i}(\phi_1\rightarrow \phi_2) \vdash \term{ 	[hypappend \cdot t]}{i}(\phi_1\rightarrow  \phi_1\wedge \phi_2), \text{ and}$$
$$\term{t}{i}(\phi_1\rightarrow \phi_2), \term{s}{i}(\phi_1\rightarrow \phi_3) \vdash \term{ 	[appendconc \cdot t \cdot s]}{i}(\phi_1\rightarrow  \phi_2 \wedge \phi_3),$$
$append$, $hypappend$, and $appendconc$ can simply be any terms such that 
$$\vdash \term{append}{i}(\phi_1\rightarrow (\phi_2 \rightarrow \phi_1\wedge \phi_2)), $$
$$\vdash \term{hypappend}{i}((\phi_1\rightarrow \phi_2) \rightarrow (\phi_1\rightarrow \phi_1\wedge \phi_2)), \text{ and}$$ 
$$\vdash \term{appendconc}{i}((\phi_1\rightarrow \phi_2) \rightarrow ((\phi_1\rightarrow \phi_3) \rightarrow (\phi_1\rightarrow \phi_2 \wedge \phi_3))).$$ 
Such terms exist, because they justify propositional tautologies and the constant specification is schematic and axiomatically appropriate (see Lemma \ref{internalization}). To define $proj^r_x$, we need terms $left, right, id, tran$, so that 
$$\vdash \term{left}{i}(\phi_1\wedge \phi_2 \rightarrow \phi_1),
\qquad\qquad\qquad
\vdash \term{right}{i}(\phi_1\wedge \phi_2 \rightarrow \phi_2),$$
$$\vdash \term{id}{i}(\phi_1\rightarrow \phi_1), \text{ and }
$$
$$
\vdash \term{tran}{i}((\phi_1\rightarrow \phi_2) \rightarrow ((\phi_2 \rightarrow \phi_3) \rightarrow (\phi_1\rightarrow \phi_3)).$$ Again, such terms exist, because they justify propositional tautologies.
Then, $proj^1_1 = id$; for $r>1$, $proj^r_r = right$; and for $l< r$, $proj^{r+1}_{l} =  [trans \cdot left \cdot proj^r_l]$.

\paragraph{Now we provide the formulas that will help us with evaluating the truth of the propositional part of the $QBF$ formula under a valuation.} These were axioms provided by the constant specification in Milnikel's proof \cite{Mil07APAL}, but as we argued before, we need the following formulas in our case. 
Let $\Psi = \{\many{\psi}{l}\}$ be an ordering of all subformulas of $\psi$, such that if $a < b$, then $|\psi_a| \leq |\psi_b|$\footnote{assume a $|\cdot |$, such that $|p_j|=1$ and if $\gamma$ is a proper subformula of $\delta$, then $|\gamma|<|\delta|$}. 
Furthermore, $\rho = |\{\chi\in \Psi \mid \left| \chi \right| = 1\}|$ and for every $1 \leq j \leq  l$,  
\begin{description}
	\item[if $\psi_j = \neg\gamma$,] then 
	$
	Eval_j = \term{truth_j}{i}([\gamma]^{\top}\rightarrow [\psi_j]^{\bot}) \wedge  \term{truth_j}{i}([\gamma]^{\bot}\rightarrow [\psi_j]^{\top});
	$
	\item[if $\psi_j = \gamma \vee \delta$,] then \[Eval_j = \term{truth_j}{i}([\gamma]^{\top} \wedge [\delta]^{\top}\rightarrow [\psi_j]^{\top}) \allowbreak \wedge \term{truth_j}{i}([\gamma]^{\top} \wedge [\delta]^{\bot}\rightarrow [\psi_j]^{\top}) \]\[ \wedge \term{truth_j}{i}([\gamma]^{\bot} \wedge [\delta]^{\top}\rightarrow [\psi_j]^{\top}) \allowbreak \wedge \term{truth_j}{i}([\gamma]^{\bot} \wedge [\delta]^{\bot}\rightarrow [\psi_j]^{\bot}); \]
	\item[if $\psi_j = \gamma \wedge \delta$,] then \[Eval_j = \term{truth_j}{i}([\gamma]^{\top} \wedge [\delta]^{\top}\rightarrow [\psi_j]^{\top}) \allowbreak \wedge \term{truth_j}{i}([\gamma]^{\top} \wedge [\delta]^{\bot}\rightarrow [\psi_j]^{\bot}) \]\[ \wedge \term{truth_j}{i}([\gamma]^{\bot} \wedge [\delta]^{\top}\rightarrow [\psi_j]^{\bot}) \allowbreak \wedge \term{truth_j}{i}([\gamma]^{\bot} \wedge [\delta]^{\bot}\rightarrow [\psi_j]^{\bot}); \]
	\item[if $\psi_j = \gamma \rightarrow \delta$,] then \[Eval_j = \term{truth_j}{i}([\gamma]^{\top} \wedge [\delta]^{\top}\rightarrow [\psi_j]^{\top}) \allowbreak \wedge \term{truth_j}{i}([\gamma]^{\top} \wedge [\delta]^{\bot}\rightarrow [\psi_j]^{\bot}) \]\[ \wedge \term{truth_j}{i}([\gamma]^{\bot} \wedge [\delta]^{\top}\rightarrow [\psi_j]^{\top}) \allowbreak \wedge \term{truth_j}{i}([\gamma]^{\bot} \wedge [\delta]^{\bot}\rightarrow [\psi_j]^{\top}). \]
\end{description}

\paragraph{We now construct term $T^J(\phi)$.} To do this we first construct terms $T^a$, where $1 \leq a \leq l$. Given a valuation $v$ in the form $\term{x_1}{i}[p_1]^{v_1}, \ldots, \term{x_k}{i}[p_k]^{v_k}$, $T^1$ through $T^k$ simply gather these formulas in one large conjunct (or string). Then for $k+1 \leq a \leq l$, $T^a$ evaluates the truth of $\psi_a$, resulting in either $[\psi]^\top$ or $[\psi]^\bot$ and appending the result at the end of the conjunct.

Let $ T^1 = x_1 $ and for every $1< a \leq k$, $T^a = [append \cdot T^{a-1} \cdot x_a ]$.
It is not hard to see that for $\manyk{v} \in \{\top,\bot \}$,
\begin{equation}\label{eq:rho-buildingvariables}
\term{x_1}{i}[p_1]^{v_1}, \ldots, \term{x_k}{i}[p_k]^{v_k} \vdash \term{T^k}{i}([p_1]^{v_1}\wedge \cdots \wedge [p_k]^{v_k}). 
\end{equation}

If $\psi_a = \neg \psi_{b}$, then 
$$T^a =     hypappend \cdot [trans\cdot proj_b^{a-1} \cdot truth_a] \cdot T^{a-1} \text{ and}$$ 
if $\psi_a = \psi_{b} \circ \psi_c$, then 
$$T^a =     hypappend \cdot [trans\cdot [appendconc \cdot proj_b^{a-1} \cdot proj_c^{a-1}] \cdot truth_a] \cdot T^{a-1}.$$

Let $$ S(\phi) = \bigwedge_{\rho < j \leq l} Eval_j $$
and
given a truth valuation $v$, let 
$$S^v(\phi) = \bigwedge_{v(p_j) = \true} \term{x_j}{i}[p_j]^\top \wedge \bigwedge_{v(p_j) = \false}\term{x_j}{i}[p_j]^\bot  \wedge \bigwedge_{\rho < j \leq l} Eval_j .$$

By induction on $a$, for every truth assignment $v$, 
$$S^v(\phi) \vdash \term{T^a}{i}([\psi_1]^{v_1} \wedge \cdots \wedge [\psi_a]^{v_a}), $$ 
where if  $\psi_b$ is true under $v$, then $v_b = \top$ and $v_b = \bot$ otherwise. 
The cases where $a \leq k$ are easy to see from (\ref{eq:rho-buildingvariables}). For the remaining cases it is enough to demonstrate that \\
if $\psi_a = \neg \psi_{j}$, then 
$S(\phi) \vdash  \term{[trans\cdot proj_j^{a-1} \cdot truth_a \cdot T^{a-1}]}{i}[\psi_a]^{v_a}$ and \\
if $\psi_a = \psi_{b} \circ \psi_c$, then 
$$S(\phi) \vdash  \term{[trans\cdot [appendconc \cdot proj_b^{a-1} \cdot proj_c^{a-1}] \cdot truth_a \cdot T^{a-1}]}{i}[\psi_a]^{v_a},$$ which is not hard to see by the way we designed each term.

Finally, let $T^J(\phi) = [right \cdot T^l]$.
We can now prove Lemma \ref{lem:internalizetarski}:

\begin{lemma}\label{lem:internalizetarski}
	For every $n \in \nat $ and agent $i \in N$, 
	$T^J(\phi), S(\phi)$ are computable in polynomial time with respect to $|\phi|$.  $\phi$ is true under truth assignment $v$ if and only if 
	$$ \bigwedge_{ v(p_a) = \true } \term{x_a}{i}[p_a]^\top  \wedge \bigwedge_{ v(p_a) = \false } \term{x_a}{i}[p_a]^\bot \wedge S(\phi) \vdash \term{T^J(\phi)}{i} [\phi]^\top.$$
\end{lemma}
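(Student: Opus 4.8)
The plan is to prove both claims of Lemma~\ref{lem:internalizetarski} separately, with the equivalence being the substantive part. For the polynomial-time computability, I would argue that $S(\phi)$ is a conjunction of the $Eval_j$ formulas, one block per non-variable subformula $\psi_j$ of $\psi$, and there are only linearly many subformulas; each $Eval_j$ has constant size relative to the subformulas it mentions, so $S(\phi)$ has size polynomial in $|\phi|$ and is clearly constructible in polynomial time. For $T^J(\phi) = [right \cdot T^l]$, I would observe that the terms $T^a$ are defined by a simple recursion over $a$, each step prepending a fixed-shape wrapper around $T^{a-1}$ using the auxiliary terms $proj$, $trans$, $appendconc$, $truth_a$, $hypappend$, and $append$; since each $proj^r_x$ unfolds to a term of size $O(r)$ and there are $l = O(|\phi|)$ stages, the total size of $T^l$ is polynomial, and the recursion is evaluable in polynomial time. (The auxiliary terms themselves exist and are of bounded size by Lemma~\ref{internalization}, since they justify fixed propositional tautology schemes under the axiomatically appropriate schematic constant specification.)

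For the equivalence, the key observation is that the derivability statement in the lemma is really a restatement, via $T^J(\phi) = [right \cdot T^l]$ and the projection property of $right$, of the stronger inductive claim already sketched in the excerpt:
$$ S^v(\phi) \vdash \term{T^a}{i}\bigl([\psi_1]^{v_1} \wedge \cdots \wedge [\psi_a]^{v_a}\bigr), $$
where $v_b = \top$ exactly when $\psi_b$ is true under $v$. So the plan is to establish this claim by induction on $a$ and then specialize. The base cases $a \leq k$ (where the $\psi_a$ are exactly the propositional variables) follow immediately from~(\ref{eq:rho-buildingvariables}): the terms $T^1,\ldots,T^k$ merely gather the hypotheses $\term{x_j}{i}[p_j]^{v_j}$ into one conjunct, and since $v_j = \top$ iff $v(p_j)=\true$, the encoded truth values match $v$ by construction of $S^v(\phi)$.

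For the inductive step I would take $a > k$, so $\psi_a$ is a compound subformula, and split on its main connective. The inductive hypothesis gives $\term{T^{a-1}}{i}([\psi_1]^{v_1}\wedge\cdots\wedge[\psi_{a-1}]^{v_{a-1}})$; I then need to extract the truth tokens of the relevant immediate subformulas using the projection terms $proj^{a-1}_b$ (and $proj^{a-1}_c$ in the binary case), combine them with $appendconc$, feed the result through the appropriate conjunct of $Eval_a$ (selected by $truth_a$ and $trans$) to obtain $[\psi_a]^{v_a}$, and finally re-append this token to the running conjunct via $hypappend$ — exactly mirroring the definition of $T^a$. The crux is to check that the particular conjunct of $Eval_a$ that fires produces precisely the token $[\psi_a]^{v_a}$ dictated by the Tarskian truth clause: e.g.\ if $\psi_a = \gamma \vee \delta$ with $v_b = \bot$ and $v_c = \bot$, then the conjunct $[\gamma]^\bot \wedge [\delta]^\bot \rightarrow [\psi_a]^\bot$ fires and yields $v_a = \bot$, matching the truth table of disjunction. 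Verifying this alignment uniformly across all four connectives (and both polarities per subformula), so that the encoded $v_a$ always equals the actual truth value of $\psi_a$ under $v$, is the main obstacle; it is where the careful design of the $Eval_j$ formulas pays off, and it is essentially a bookkeeping check against the semantic clauses, each case reducing to the derivation schema already displayed in the excerpt. Specializing the claim to $a = l$ and applying $right$ (via $T^J(\phi)=[right\cdot T^l]$) extracts $[\psi_l]^{v_l} = [\phi]^{v_l}$, and since $\phi = \psi_l$, this equals $[\phi]^\top$ exactly when $v_l = \top$, i.e.\ exactly when $\phi$ is true under $v$, giving both directions of the equivalence at once.
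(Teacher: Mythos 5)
Your polynomial-time computability argument is fine (the paper leaves this part essentially implicit), and your induction on $a$ correctly reproduces the paper's argument for the \emph{forward} direction: if $\phi$ is true under $v$, the intended derivation through $T^1,\ldots,T^l$ and $right$ yields $\term{T^J(\phi)}{i}[\phi]^\top$. The genuine gap is in your final step, where you claim that specializing the inductive claim to $a=l$ gives ``both directions of the equivalence at once.'' It does not. The inductive claim tells you what the \emph{intended} derivation produces, namely $\term{T^J(\phi)}{i}[\phi]^{v_l}$; when $\phi$ is false under $v$ this gives $\term{T^J(\phi)}{i}[\phi]^\bot$, which in no way establishes that $\term{T^J(\phi)}{i}[\phi]^\top$ is \emph{not} derivable. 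Derivability is existential over all derivations, not just the one mirroring the term's design, and in justification logic one term can justify many formulas. Concretely, each $Eval_a$ supplies up to four hypotheses $\term{truth_a}{i}(\cdot)$ all carrying the \emph{same} term $truth_a$, so a derivation may plug any of the four truth-table clauses in at the $truth_a$ leaf; moreover the constant specification is only assumed schematic, not schematically injective (weakening that very hypothesis is a stated goal of the paper), so the constants inside $trans$, $appendconc$, etc.\ may justify additional schemes, opening further unintended derivation paths. Nothing in your proposal rules these out.

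The paper closes exactly this gap with a separate, semantic argument for the ``derivable implies true'' direction: from $S^v(\phi) \vdash \term{T^J(\phi)}{i}[\phi]^\top$ it passes to a $*$-calculus derivation of $*_i([right\cdot T^l],[\phi]^\top)$ from $*S^v(\phi)$; since none of the terms involved contains $!$, the formula components of that derivation constitute a purely propositional derivation, so $(S^v(\phi))^{\#_i} \vdash [\phi]^\top$ in classical propositional logic. If $\phi$ were false under $v$, the valuation $v'$ defined by $v'([\psi]^\top)=\true$ iff $\psi$ is true under $v$, and $v'([\psi]^\bot)=\true$ iff $\psi$ is false under $v$, satisfies every formula in $(S^v(\phi))^{\#_i}$ (the $Eval$-implications are exactly the truth-table clauses, hence true under $v'$) while falsifying $[\phi]^\top$ --- contradicting propositional soundness. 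You need this non-derivability argument, or an equivalent one, for the backward direction; without it the lemma is only half proved, and it is precisely the half on which the correctness of the QBF reduction (which exploits non-derivability when $\phi$ is false) depends.
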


\begin{proof}
	From the above construction we can see that if $\phi$ is true under $v$ then $S^v(\phi) \vdash \term{T^J(\phi)}{i} [\phi]^\top$. On the other hand, if $S^v(\phi) \vdash \term{T^J(\phi)}{i} [\phi]^\top$, then $*S^v(\phi) \vdash_* *_i([right \cdot T^l],[\phi]^\top)$, which in turn gives $(S^v(\phi))^{\#_i} \vdash [\phi]^\top$ (the terms do not include the operator $!$ and thus the right side of a $*$-derivation is a derivation in propositional logic). If  $\phi$ is not true under $v$, then let $v'$ be the valuation, such that $v'([\psi]^\top) = \true$ iff $\psi$ is true under $v$ and $v'([\psi]^\bot) = \true$ iff $\psi$ is false under $v$.
	Then all of $(S^v(\phi))^{\#_i}$ is true under $v'$ and $[\phi]^\top$ is not, therefore$(S^v(\phi))^{\#_i} \not\vdash [\phi]^\top$, so  $S^v(\phi) \not\vdash \term{T^J(\phi)}{i} [\phi]^\top$.
	\qed
\end{proof}

\begin{corollary}
	The QBF formula $\exists \manyk{p} \forall p_{k+1},\ldots, p_{k+l} \phi$ is true if and only if the following formula is $J$-satisfiable:
	$$\bigwedge_{j=1}^k (\term{x_j}{i}[p_j]^\top \vee \term{x_j}{i}[p_j]^\bot) \wedge \bigwedge_{j=k+1}^l (\term{x_{j}}{i}[p_{j}]^\top \wedge \term{x_{j}}{i}[p_{j}]^\bot) \wedge S(\neg\phi) \wedge \neg T^J(\neg\phi) [\neg\phi]^\top .$$ 
\end{corollary}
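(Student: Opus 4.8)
The plan is to reduce the truth of the $QBF_2$ formula $\exists x_1\cdots x_k \forall y_1\cdots y_{k'}\,\psi$ to the $J$-satisfiability of the displayed formula, combining Lemma \ref{lem:internalizetarski} (which internalizes the truth-evaluation of a propositional formula under a fixed valuation as a $*$-calculus derivation) with Lemma \ref{lem:universalchoicefor1variable} (which governs how a single occurrence of a subterm consumes at most one premise). Write the displayed formula as $\Xi = E \wedge U \wedge S(\neg\phi) \wedge \neg\,\term{T^J(\neg\phi)}{i}[\neg\phi]^\top$, where $E = \bigwedge_{j=1}^k(\term{x_j}{i}[p_j]^\top \vee \term{x_j}{i}[p_j]^\bot)$ encodes the existential choices and $U = \bigwedge_{j=k+1}^l(\term{x_j}{i}[p_j]^\top \wedge \term{x_j}{i}[p_j]^\bot)$ supplies both polarities for the universally quantified variables. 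The intuition is that a satisfying F-model must, via the disjunctions in $E$, commit to a truth value for each $x_j$ ($j\le k$), while $U$ leaves both polarities available for the $y$-variables; the final conjunct $\neg\,\term{T^J(\neg\phi)}{i}[\neg\phi]^\top$ then demands, through the strong evidence property, that the aef does \emph{not} support the evaluation term witnessing $[\neg\phi]^\top$.

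First I would argue the forward direction. Suppose the $QBF_2$ formula is true, so there is an existential assignment $v_\exists$ to $x_1,\dots,x_k$ making $\psi$ true under every extension to the $y$'s; equivalently $\neg\phi$ (with $\phi=\exists\cdots\forall\cdots\psi$) is false under every completing valuation. I would build an F-model with the strong evidence property (invoking Proposition \ref{cor:exponentialstates} to justify working with such models) whose aef is the minimal one generated by $*S^{v_\exists}(\neg\phi)$ together with the chosen literals for the existential variables; one realizes $E$ by validating the disjunct matching $v_\exists$, and $U$ is validated outright since it is a conjunct already present. The point is that for every completion $v$ of $v_\exists$ to the universal variables, $\neg\phi$ is false under $v$, so by the ``only if'' direction of Lemma \ref{lem:internalizetarski} we have $S^{v}(\neg\phi) \not\vdash \term{T^J(\neg\phi)}{i}[\neg\phi]^\top$; by Proposition \ref{prp:proofbystarcalc} this non-derivability is exactly what is needed to keep the minimal aef from placing the relevant world in $\A_i(T^J(\neg\phi),[\neg\phi]^\top)$, so $\neg\,\term{T^J(\neg\phi)}{i}[\neg\phi]^\top$ holds and $\Xi$ is satisfied.

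Conversely, suppose $\Xi$ is $J$-satisfiable, say at a world $w$ of a strong-evidence model. From $w\models E$ read off an existential assignment $v_\exists$ by choosing, for each $j\le k$, a disjunct that holds. The key step is that the final term $T^J(\neg\phi)$ is built so that each existential variable-term $x_j$ ($j\le k$) occurs \emph{at most once} in it, whereas the universal machinery uses the doubled premises from $U$; applying Lemma \ref{lem:universalchoicefor1variable} to each single occurrence shows that a derivation of $\term{T^J(\neg\phi)}{i}[\neg\phi]^\top$ from the premises available at $w$ can use only one polarity per existential variable, forcing the effective premise set to be contained in some $S^{v}(\neg\phi)$ with $v$ extending $v_\exists$. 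Since $w\models\neg\,\term{T^J(\neg\phi)}{i}[\neg\phi]^\top$, no such derivation exists for the particular completion realized, and by the ``if'' direction of Lemma \ref{lem:internalizetarski} this means $\neg\phi$ is false there, i.e. $\psi$ is made true; because $U$ makes both polarities of every $y_j$ simultaneously available, the same conclusion must hold for \emph{every} completion, giving that $v_\exists$ witnesses the truth of the $QBF_2$ formula.

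The main obstacle I anticipate is the converse direction's use of Lemma \ref{lem:universalchoicefor1variable} to pin down exactly which premises a successful $*$-derivation could consume: I must verify that in the term $T^J(\neg\phi)=[right\cdot T^l]$ each $x_j$ with $j\le k$ genuinely appears only once (so the single-occurrence hypothesis of the lemma applies) while the doubled premises of $U$ correctly enforce the universal quantification, and that the operator $!$ does not appear in $T^J(\neg\phi)$ (needed both for Lemma \ref{lem:universalchoicefor1variable} and to ensure the right-hand side of the $*$-derivation collapses to a purely propositional derivation, as in the proof of Lemma \ref{lem:internalizetarski}). Care is also required because $E$ only guarantees \emph{at least} one disjunct, not a unique one, so I would argue that any choice consistent with $w$ yields a legitimate $v_\exists$, and that the satisfiability of $\Xi$ is robust to which disjunct is picked.
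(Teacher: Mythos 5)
Your overall route is the paper's own: the same reduction shape (disjunctions for the existential variables, both polarities conjoined for the universal ones, and the final negated conjunct $\neg\term{T^J(\neg\phi)}{i}[\neg\phi]^\top$), with Lemma \ref{lem:internalizetarski} translating derivability into truth under a valuation and Lemma \ref{lem:universalchoicefor1variable} handling the doubled premises; the paper merely argues both directions syntactically via soundness and completeness where you construct a minimal-aef model. The genuine problem is \emph{where} you deploy Lemma \ref{lem:universalchoicefor1variable}. In your forward direction the minimal aef is generated by a premise set containing \emph{both} $\term{x_j}{i}[p_j]^\top$ and $\term{x_j}{i}[p_j]^\bot$ for every universal index $j$, so what you must show is that this \emph{doubled} set does not derive $\term{T^J(\neg\phi)}{i}[\neg\phi]^\top$. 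Your claim that per-completion non-derivability ``is exactly what is needed'' is false as stated: $S^v(\neg\phi)\not\vdash\term{T^J(\neg\phi)}{i}[\neg\phi]^\top$ for each single-polarity completion $v$ does not by itself give non-derivability from their union. Lemma \ref{lem:universalchoicefor1variable} --- applied to each universal variable $x_j$, $k+1\le j\le l$, using that each occurs at most once in the $!$-free term $T^J(\neg\phi)$ --- is precisely the missing bridge (it is the step where the paper extracts the choice $c_2$), and your text omits it at exactly that point.

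Symmetrically, in your converse direction the lemma is not needed at all: once soundness gives that the full set of formulas true at the satisfying world fails to derive $\term{T^J(\neg\phi)}{i}[\neg\phi]^\top$, monotonicity of $\vdash$ already yields non-derivability from every subset $S^{(c_1,c_2)}(\neg\phi)$, and the ``if'' direction of Lemma \ref{lem:internalizetarski} then makes $\phi$ true under every completion of $c_1$. Relatedly, you repeatedly attach the single-occurrence hypothesis to the \emph{existential} variables $x_j$, $j\le k$, for which only one polarity is selected anyway; the application that carries the proof concerns the \emph{universal} variables, whose premises are doubled. Since your final paragraph does flag verifying single occurrences and $!$-freeness of $T^J(\neg\phi)$, the repair lies within your own toolkit, but as written the forward direction has its hole exactly where the paper's proof does its main work.
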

\begin{proof}
	If 
	$$\bigwedge_{j=1}^k (\term{x_j}{i}[p_j]^\top \vee \term{x_j}{i}[p_j]^\bot) \wedge \bigwedge_{j=k+1}^l (\term{x_{j}}{i}[p_{j}]^\top \wedge \term{x_{j}}{i}[p_{j}]^\bot) \wedge S(\neg\phi) \wedge \neg T^J(\neg\phi) [\neg\phi]^\top $$
	is not satisfiable, then 
	$$\bigwedge_{j=1}^k (\term{x_j}{i}[p_j]^\top \vee \term{x_j}{i}[p_j]^\bot) \wedge \bigwedge_{j=k+1}^l (\term{x_{j}}{i}[p_{j}]^\top \wedge \term{x_{j}}{i}[p_{j}]^\bot) \wedge S(\neg\phi)  \vdash T^J(\neg\phi) [\neg\phi]^\top,$$
	and then for every choice $c_1:\{1,\ldots, k\}\To \{\top,\bot\}$,
	$$\bigwedge_{j=1}^k (\term{x_j}{i}[p_j]^{c_1(j)}) \wedge \bigwedge_{j=k+1}^l (\term{x_{j}}{i}[p_{j}]^\top \wedge \term{x_{j}}{i}[p_{j}]^\bot) \wedge S(\neg\phi) \vdash T^J(\neg\phi) [\neg\phi]^\top,$$
	and then since every variable from $\many{x}{k+l}$ appears at most once in $T^J$ and $T^J$ does not include $!$, by Lemma \ref{lem:universalchoicefor1variable} there is some choice $c_2:\{1,\ldots, l\}\To \{\top,\bot\}$ such that
	$$\bigwedge_{j=1}^k (\term{x_j}{i}[p_j]^{c_1(j)}) \wedge \bigwedge_{j=k+1}^l (\term{x_{j}}{i}[p_{j}]^{c_2(j)}) \wedge S(\neg\phi) \vdash T^J(\neg\phi) [\neg\phi]^\top.$$
	Therefore, for every assignment of truth-values on $\manyk{p}$ there truth-values for $p_{k+1},\ldots,p_{l+k}$ that make $\phi$ false.
	
	On the other hand, if 
	$$\bigwedge_{j=1}^k (\term{x_j}{i}[p_j]^\top \vee \term{x_j}{i}[p_j]^\bot) \wedge \bigwedge_{j=k+1}^l (\term{x_{j}}{i}[p_{j}]^\top \wedge \term{x_{j}}{i}[p_{j}]^\bot) \wedge S(\neg\phi) \wedge \neg T^J(\neg\phi) [\neg\phi]^\top$$ is satisfiable,
	then there is some choice $c_1:\{1,\ldots, k\}\To \{\top,\bot\}$, such that
	$$\bigwedge_{j=1}^k (\term{x_j}{i}[p_j]^{c_1(j)}) \wedge \bigwedge_{j=k+1}^l (\term{x_{j}}{i}[p_{j}]^\top \wedge \term{x_{j}}{i}[p_{j}]^\bot) \wedge S(\neg\phi) \wedge \neg T^J(\neg\phi) [\neg\phi]^\top$$ is satisfiable,
	and then since every variable from $\many{x}{k+l}$ appears at most once in $T^J$, for every choice $c_2:\{1,\ldots, l\}\To \{\top,\bot\}$,
	$$\bigwedge_{j=1}^k (\term{x_j}{i}[p_j]^{c_1(j)}) \wedge \bigwedge_{j=k+1}^l (\term{x_{j}}{i}[p_{j}]^{c_2(j)}) \wedge S(\neg\phi) \not\vdash T^J(\neg\phi) [\neg\phi]^\top.$$
	Therefore, there is some truth assignment on $\manyk{p}$ such that every truth assignment on $p_{k+1},\ldots,p_{l+k}$ makes $\phi$ true.
	\qed
\end{proof}

Theorem \ref{thm:S2lower} is then a direct corollary of the above.


\section{A \NEXP-complete Justification Logic}

The justification logic we prove to have a \NEXP-complete satisfiability problem is the 4-agent logic $J_H = (4,\subset,\ver,F)_\cs$, where 
\begin{itemize}
	\item 
$\subset = \{ (3,4) \}$, 
\item
$\inver = \{ (1,2), (2,3), (4,4) \}$, 
\item
$F(1)=F(2) = \j$,  $F(3)=F(4)=\jd$, and 
\item
$\cs$ is any axiomatically appropriate and schematic constant specification.
\end{itemize}

The agents of $J_H$ are based on justification logics \j\ and \jd\ -- and essentially \jdf, as agent 4 has Positive Introspection.
Agent 3 has a significant variety of justifications. Since $1\inver 2 \inver 3$, 3 is aware of the justifications of 2, who in turn is aware of the justifications of 1. Therefore, 3 can simulate the reasoning of 2 who can simulate the reasoning of 1. Additionally, 3 accepts two types of justifications: the ones 3 receives from 4, which come with Positive Introspection and the other ones 3 accepts, which do not. As Theorem \ref{thm:nexphard} demonstrates, 
this complex interaction among agent 3's justifications results in the significant hardness of $J_H$-satisfiability.

If we only focus on agents $3$ and $4$, we have a \PSPACE-complete justification logic \cite{Achilleos2014CLIMA,Achilleos2014EUMAS}. In a tableau procedure which constructs a model for a given formula (like the one in \cite{Achilleos2014EUMAS}), this means that we may have to consider a large number of states. If we could simply explore smaller parts of the model as we can often do for Modal Logic, we could still end up with an (alternating perhaps) polynomial space algorithm. The satisfiability-testing procedures for Justification Logic have another part, though, and that is testing whether certain $*^\F$-expressions can be derived in a frame $\F$ from a certain set of $*^\F$-expressions using the $*$-calculus -- which corresponds to asking whether there is an aef that satisfies certain expressions and not others. By Proposition \ref{thm:calccompnew}, this can be done using a nondeterministic procedure which takes time polynomial with respect to $|\F|$ and to the overall size of the set of $*^\F$-expressions.
Although the complexity of that procedure is not something which increases the overall complexity of satisfiability-testing \cite{Achilleos2014EUMAS}, to run it we must keep the whole frame $\F$ in memory and $\F$ can be large, which requires exponential time and more than polynomial space. Nondeterminism is introduced as we apply the tableau rules, as some require nondeterministic choices. Assuming $\PSPACE \neq \NEXP$, this is a difficulty we cannot overcome.

\begin{theorem}\label{thm:nexphard}
	$J_H$-satisfiability is \NEXP-hard.
\end{theorem}

The reduction we use 
is from a subproblem of the \emph{SCH\"{O}NFINKEL-BERNAYS $\SAT$} problem, which we call \emph{BINARY SCH\"{O}NFINKEL-BERNAYS $\SAT$}:
 \begin{quote}
 	Given a first-order formula $\phi$ of the form 
$\exists x_1 \cdots \exists x_k \forall y_1 \cdots \forall y_{k'} \psi, $
where $\psi$ contains no quantifiers or function symbols, is $\phi$ satisfiable by a first-order model of exactly two elements?
 \end{quote} 
 The general \emph{SCH\"{O}NFINKEL-BERNAYS $\SAT$} problem does not require that a satisfying model has exactly two elements and is known to be \NEXP-complete \cite{Lewis1980317}; \emph{BINARY SCH\"{O}NFINKEL-BERNAYS $\SAT$}  remains \NEXP-complete.

The reduction for Theorem \ref{thm:nexphard} is essentially an extended version of the reduction we used to prove Theorem \ref{thm:S2lower}. Like then, consider a construction of a satisfying model, only this time it is an F-model with several states and accessibility relations for agents. Another difference is, of course, that now the original formula is from the first-order language. However, in the \emph{BINARY SCH\"{O}NFINKEL-BERNAYS $\SAT$} formulation, each (first-order) variable is quantified over two possible values (the elements of the two-element model), so they are essentially propositional variables. 
Since this is satisfiability we must existentially quantify each relation symbol over all $2^{r+1}$ $r$-ary relations. We can encode such a nondeterministic choice by forcing the existence of an exponential number of states, each representing one $r$-tuple $v=\manyr{v}$ of the two possible values $0$ and $1$ (as mentioned above, we can do this using agents 3 and 4) by having $\term{var}{1}[p_a]^{v_a}$ being true and then at each such state enforce the choice between $\term{rel}{1}[R]^\top$ and $\term{rel}{1}[R]^\bot$, meaning that $v \in R$ or $v \notin R$ respectively -- where $R$ an actual relation. In such a state conjunctions of the form $\term{gather}{1}([p_1]^{v_1}\wedge \cdots \wedge [p_r]^{v_r}\wedge [R]^{\triangle})$ (where $\triangle = \top$ or $\bot$) encode this choice. Due to the particular interaction among the agents and the logics they are based on, in the constructed model $\term{gather}{1}([p_1]^{v_1}\wedge \cdots \wedge [p_r]^{v_r}\wedge [R]^{\triangle})$ is true in a state if and only if that state represents $v$ and $\triangle = \top$ iff $v \in R$. Already this $J_H$-model encodes a first-order model. The trick now is to be able to gather in one state all these formulas that encode the relations through the aef closure conditions (i.e. through the $*$-calculus), but making sure that individual conjuncts (i.e. something of the form $\term{var}{1}[p]^\triangle$ or $\term{rel}{1}[R]^\triangle$) cannot be also transfered to that state through the calculus -- in that case we would be able to construct $\term{gather}{1}([p_1]^{v_1}\wedge \cdots \wedge [p_r]^{v_r}\wedge [R]^{\triangle})$ for additional, invalid combinations of $(v,\triangle)$. This is achieved by considering formulas of the form $\term{!gather}{2}\term{gather}{1}([p_1]^{v_1}\wedge \cdots \wedge [p_r]^{v_r}\wedge [R]^{\triangle})$. The constructed model has empty accessibility relations for agents 1 and 2, thus such formulas can move freely through the accessibility relation of agent 3 (since $2 \inver 3$ and because of Distribution), but this is not the case for anything of the form $\term{t}{1}\chi$ (since $1 \not \inver 3,4$).
Using certain additional formulas we can make sure that $\term{!gather}{2}\term{gather}{1}([p_1]^{v_1}\wedge \cdots \wedge [p_r]^{v_r}\wedge [R]^{\triangle}) \rightarrow [R(\manyr{x})]^\triangle$ becomes true if and only if $\manyr{x}$ are interpreted as $\manyr{v}$.
 The remaining of the formulas and methods we use are very similar to the ones we use for Theorem \ref{thm:S2lower}.

By combining Corollary \ref{cor:NPandNEXPandP} and Theorem \ref{thm:nexphard}, we can claim the following:

\begin{corollary}\label{cor:nexpcomp}
	$J_H$-satisfiability is \NEXP-complete.
\end{corollary}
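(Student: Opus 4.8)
The final statement is Corollary~\ref{cor:nexpcomp}, which asserts that $J_H$-satisfiability is \NEXP-complete. The plan is to observe that \NEXP-completeness is simply the conjunction of a matching upper bound and lower bound, and that both halves are already available in the excerpt. The entire content of the proof is therefore to cite the two ingredients and check that their hypotheses are met by $J_H$; there is essentially no new mathematics to carry out.

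First I would establish membership in \NEXP. For this I invoke Corollary~\ref{cor:NPandNEXPandP}(2), which states that whenever $\cs \in \P$ and $\cs$ is schematic and axiomatically appropriate, the satisfiability problem for $J$ is in \NEXP. I must check that $J_H$ meets these conditions. By the definition of $J_H$ in this section, $\cs$ is taken to be any axiomatically appropriate and schematic constant specification, so the schematicity and axiomatic appropriateness hypotheses hold by stipulation; the requirement $\cs \in \P$ is the standard decidability-in-polynomial-time assumption on the constant specification under which all the upper-bound results in the paper are stated, so it applies here as well. Hence Corollary~\ref{cor:NPandNEXPandP}(2) applies directly to $J = J_H$ and yields that $J_H$-satisfiability is in \NEXP.

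Next I would establish \NEXP-hardness, which is exactly the content of Theorem~\ref{thm:nexphard}. Since $J_H$-satisfiability is both \NEXP-hard (Theorem~\ref{thm:nexphard}) and in \NEXP\ (the previous paragraph), it is \NEXP-complete by definition, completing the proof.

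I do not anticipate any genuine obstacle here, since the corollary is a bookkeeping combination of two previously stated results. The only point deserving a line of care is confirming that the hypotheses of Corollary~\ref{cor:NPandNEXPandP}(2) are genuinely satisfied by the particular logic $J_H$ — specifically that ``any axiomatically appropriate and schematic constant specification'' in the definition of $J_H$ together with the ambient $\cs \in \P$ assumption supplies precisely the three conditions the corollary demands. Once that alignment is noted, the two bounds pin the complexity from both sides and the result follows immediately.
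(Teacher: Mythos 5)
Your proposal is correct and matches the paper's own argument exactly: the paper derives Corollary~\ref{cor:nexpcomp} by combining the \NEXP\ upper bound of Corollary~\ref{cor:NPandNEXPandP}(2) with the hardness result of Theorem~\ref{thm:nexphard}, just as you do. Your extra care in checking that $J_H$'s constant specification (axiomatically appropriate, schematic, and $\cs \in \P$) satisfies the hypotheses of the upper-bound corollary is a reasonable elaboration of a step the paper leaves implicit.
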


\subsection{Proof of Theorem \ref{thm:nexphard}}

The reduction we use  is from (a variation of) the \emph{SCH\"{O}NFINKEL-BERNAYS $\SAT$} problem: given a first-order formula $\phi$ of the form 
\[ \exists x_1 \cdots \exists x_k \forall y_1 \cdots \forall y_{k'} \psi, \] 
where $\psi$ contains no quantifiers or function symbols, is $\phi$ satisfiable by a first-order model?

SCH\"{O}NFINKEL-BERNAYS $\SAT$ is known to be $\NEXP$-complete (\cite{Lewis1980317}). 
Furthermore, it is not hard to see that if \[ \exists x_1 \cdots \exists x_k \forall y_1 \cdots \forall y_{k'} \psi, \] is satisfiable, then it is satisfiable by a model of at most $k$ elements. For the coming reduction, we instead use for convenience a simplified version of this problem, which we call \emph{BINARY SCH\"{O}NFINKEL-BERNAYS $\SAT$} and is the same problem, only instead we ask if $\exists x_1 \cdots \exists x_k \forall y_1 \cdots \forall y_{k'} \psi$ is satisfiable by a first-order model of exactly two elements.

For the reductions that follow we use the following notation: for a non-negative integer $x\in \N$, let $bin(x) = bin_0(g),\ldots , bin_{\log g}(g)$ be its binary representation. Furthermore, like in Section \ref{sec:universal}, 
for every propositional and first-order formula $\psi$ we introduce propositional variables $[\psi]^\top$ and $[\psi]^\bot$.

\begin{lemma}
	BINARY SCH\"{O}NFINKEL-BERNAYS $\SAT$
	is \NEXP-complete. 
\end{lemma}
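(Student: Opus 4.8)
The plan is to establish membership in \NEXP\ directly and \NEXP-hardness by a reduction from the general SCH\"{O}NFINKEL-BERNAYS $\SAT$ problem, using the bounded-model observation recorded just above the statement. For membership, I would give a guess-and-verify algorithm that works over the fixed two-element domain $\{0,1\}$: given $\phi = \exists \manyk{x}\forall \many{y}{k'}\psi$, nondeterministically guess a two-element structure -- that is, for each relation symbol $R$ of arity $r$ occurring in $\psi$ a truth table of $2^r$ entries, together with a value in $\{0,1\}$ for each existential variable $\manyk{x}$. Since every arity $r$ is bounded by $|\phi|$, each table has at most exponentially many entries, so the whole structure is guessed in exponential time. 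Then deterministically check that for each of the $2^{k'}$ assignments of $\{0,1\}$-values to $\many{y}{k'}$ the quantifier-free matrix $\psi$ evaluates to \true; each evaluation is polynomial and there are exponentially many, so BINARY SCH\"{O}NFINKEL-BERNAYS $\SAT$ is in \NEXP.

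For hardness I reduce from (general) SCH\"{O}NFINKEL-BERNAYS $\SAT$, which is \NEXP-complete. I use the fact noted above that a satisfiable $\phi = \exists \manyk{x}\forall \many{y}{k'}\psi$ has a model whose domain consists only of the (at most $k$) witnesses of its existential block, hence is satisfiable iff it has a model of size at most $\max(2,k)$. Fix $d = \lceil \log_2 \max(2,k) \rceil$, so $2^d \geq \max(2,k)$ and $d \geq 1$, and encode each element of such a small model by a length-$d$ tuple over $\{0,1\}$. Concretely, I replace every first-order variable $z$ by a block $z^1,\dots,z^d$ of fresh variables, replace every $r$-ary relation symbol $R$ by a fresh $(rd)$-ary symbol $R'$, and rewrite each atom $R(z_{i_1},\dots,z_{i_r})$ as $R'(z_{i_1}^1,\dots,z_{i_r}^d)$ (an equality atom, if the language has one, becoming the quantifier-free coordinatewise equality of the two blocks), obtaining a matrix $\psi'$. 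To cope with the fact that $\max(2,k)$ need not be a power of two, I add one more fresh $d$-ary predicate $D$ marking the \emph{active} tuples and relativize the quantifiers, yielding the binary instance
\[ \phi' = \exists \bar x_1\cdots\exists\bar x_k \big[\,\textstyle\bigwedge_{a} D(\bar x_a)\big] \wedge \forall\bar y_1\cdots\forall\bar y_{k'}\big[\big(\textstyle\bigwedge_{b} D(\bar y_b)\big)\to\psi'\big], \]
where $\bar x_a,\bar y_b$ abbreviate the corresponding length-$d$ blocks. This $\phi'$ is again of the required $\exists^*\forall^*$ shape with a quantifier-free matrix, and it is computable from $\phi$ in polynomial time since $d = O(\log|\phi|)$.

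To prove correctness, in the forward direction I take a model $\M$ of $\phi$ of size $m\leq\max(2,k)$, fix an injection of its domain into $\{0,1\}^d$, interpret $D$ as the image, interpret each $R'$ on $D$-tuples to mirror $R$ (and as \false\ elsewhere), and send the existential witnesses to the codes of the witnesses of $\M$; the guard $D$ makes the universal part vacuous outside the image and faithful to $\psi$ on it, so the resulting two-element structure satisfies $\phi'$. Conversely, from a two-element model $\mathcal{N}$ of $\phi'$ I read off a first-order model of $\phi$ whose domain is the set $D^{\mathcal{N}}\subseteq\{0,1\}^d$ of active tuples (nonempty, since it contains the existential witnesses), with each $R$ interpreted as $R'$ restricted to $D^{\mathcal{N}}$; the $D$-guards guarantee that $\psi$ holds for every tuple of active elements, so $\phi$ is satisfied.

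The one point that needs care -- and the main obstacle -- is exactly this cardinality mismatch: because a satisfiable Bernays--Schönfinkel instance need not have a model of size a power of two, one cannot simply declare all $2^d$ codes to be genuine elements, since the universal block would then be forced to hold on spurious elements and a satisfiable formula could become unsatisfiable. The domain predicate $D$ together with the relativization of the quantifiers is precisely what resolves this, and the only thing to verify is that the relativized formula stays inside the $\exists^*\forall^*$ fragment -- which it does, as the guards $\bigwedge_b D(\bar y_b)\to\cdots$ introduce no new quantifiers -- so that $\phi'$ is a legitimate instance of the binary problem.
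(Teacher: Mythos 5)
Your proof is correct, and its skeleton is the paper's own: establish hardness by reducing from general SCH\"ONFINKEL-BERNAYS $\SAT$, invoke the small-model property to get a satisfying model whose universe is (contained in) the set of at most $k$ existential witnesses, encode each element as a block of $d = O(\log k)$ bits, blow each $r$-ary relation up to arity $rd$, and relativize the universal block so that the spurious codes in $\{0,1\}^d$ cannot falsify the matrix. The one structural difference is the relativization device. The paper introduces no new predicate: it guards the matrix with $\bigwedge_{b=1}^{k'}\bigvee_{a=1}^{k} \vec{x}_a = \vec{y}_b \rightarrow \psi'$, exploiting the fact that the small model's universe can be taken to be \emph{exactly} the interpretations of the existential variables, so ``active code'' is definable as ``coordinatewise equal to some witness block''; you instead add a fresh $d$-ary domain predicate $D$, conjoin $D(\bar{x}_a)$ for the witnesses, and guard with $\bigwedge_b D(\bar{y}_b)$. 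The two devices are interchangeable here: the equality guard keeps the signature unchanged (at the cost of equality atoms in the matrix, which the fragment permits), while your $D$ works even if the language lacks equality and does not depend on the universe coinciding with the witness set. You also spell out the \NEXP\ membership half (guess the exponentially long truth tables and a witness assignment, then verify all $2^{k'}$ universal assignments), which the paper leaves implicit, and you are more careful than the paper about degenerate parameters by taking $d = \lceil \log_2 \max(2,k)\rceil \geq 1$. Two small points to tighten: your displayed $\phi'$ should be prenexed (pull the universal block inside the scope of the existentials) so that it literally lies in the $\exists^*\forall^*$ fragment, and when $k=0$ your guard can be satisfied vacuously by $D^{\mathcal{N}} = \emptyset$, so a dummy existential variable should be added --- the paper's formula has the same boundary degeneracy (its guard becomes an empty disjunction), so this is a shared and trivially fixable issue, not a flaw specific to your construction.
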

\begin{proof}
Let $\phi$ be 
a first-order formula of the form 
\[ \exists x_1 \cdots \exists x_k \forall y_1 \cdots \forall y_{k'} \psi, \] 
where $\psi$ contains no quantifiers or function symbols. Furthermore, we assume that $\psi$ contains no constants.
We can replace each $x_a$ by $\vec{x}_a = x_a^{1},x_a^{2},\ldots,x_a^{\lceil \log k \rceil }$ and each $y_b$ by $\vec{y}_b = y_b^{1},y_b^{2},\ldots,y_b^{\lceil \log k \rceil }$ in the quantifiers and wherever they appear in a relation. Therefore $\exists x_a$ is replaced by $\exists x_a^1 \exists x_a^2 \cdots \exists x_a^{\lceil \log k \rceil }$ ($\exists \vec{x_a}$ for short) and $\forall x_a$ is replaced by $\forall x_a^1 \forall x_a^2 \cdots \forall x_a^{\lceil \log k \rceil }$ ($\forall \vec{y_a}$ for short) and $R(\manyr{z})$ is replaced by $R(\manyr{\vec{z}})$.  Furthermore, every expression $z = z'$ where $z,z'$ are variables, is replaced by $\bigwedge_{1 \leq a \leq \lceil \log k \rceil} z^a = z'^{a}$ ($\vec{z}=\vec{z'}$ for short). The result of all these replacements in $\psi$ is called $\psi'$. The new formula is:
\[\phi' = \exists \vec{x}_1 \cdots \exists \vec{x}_k \forall \vec{y}_1 \cdots \forall \vec{y}_{k'} \left(\bigwedge_{b = 1}^{k'} \bigvee_{a=1}^{k} \vec{x_a} = \vec{y_b} \rightarrow \psi'\right) \]
We can also define a corresponding transformation of first-order models: assume that the universe of model $\M$ for $\phi$ is a set of at most $k$ natural numbers (each of which is at most $k-1$ and an interpretation for some $x_a$); then $\M'$ is the model with $\{0,1\}$ as its universe, where for every relation $R$ (on tuples of naturals) of $\M$ there is some $R'$, which is essentially the same relation, but on the binary representations of the elements of $\M$. That is, $$R' = \{(bin(a_1),\ldots,bin{(a_r)}) \in \{0,1\}^* \mid  (\manyr{a}) \in R \}$$ It is not hard to see that if $\M$ satisfies the original formula, then $\M'$ satisfies the new one: each $\vec{x}_a$ can be interpreted as the binary representation of the interpretation of $x_a$ in $\M$ and notice that the added equality assertions effectively limit the $\vec{y}$'s to range over the interpretations of the $\vec{x}$'s, which are then exactly the image of the elements of $\M$. 

On the other hand, if $\phi'$ is satisfied by a model with $\{0,1\}$ as its universe, then $\phi$ is satisfied by the model which has the $\lceil \log k \rceil$-tuples of $\{0,1\}$ that are the interpretations of $\manyk{\vec{x}}$ as elements and as relations the restrictions of the two-element model's relations on these tuples.
\qed
\end{proof}

Given a first-order formula $\phi$ as above, we construct a justification formula, $\phi^J$, in polynomial time, such that $\phi$ is satisfiable by a two-element model if and only if $\phi$ is satisfiable by a $J$-model.
The reader will notice several similarities to the proof of Theorem \ref{thm:S2lower}.

Let
\[\phi = \exists x_1 \cdots \exists x_k \forall y_1 \cdots \forall y_{k'} \psi \] 
be such a formula, where $\psi$ contains no quantifiers or function symbols. 
Let $\manym{R}$ be the relation symbols appearing in $\psi$, $\manym{a}$ their respective arities. Then, let $\alpha = \{ i \in \nat \mid \exists r \leq m \text{ s.t. } i \leq a_r \}$; then,  $|\alpha| = \max \{\manym{a}\}$. 
We also define:
$X = \{\manyk{x} \}$; $Y = \{\many{y}{k'} \}$; $Z = X \cup Y$; $\rho_0 = k + k'$.

For this reduction, in addition to the terms introduced in Section \ref{sec:universal}, we define the following justification terms. If we expect a term to justify a tautological scheme of fixed length, then we can just assume the term exists and has some constant size. Otherwise we construct the term in a way that gives it size polynomial with respect to the formula it (provably) justifies. Again we need certain terms to encode manipulations of long conjunctions (which we can see as strings) and we start with these.

\begin{description}
\item[$addhyp$ is such that] $\vdash \term{addhyp}{1}(\phi \rightarrow (\psi \rightarrow \phi))$;
\item[$replaceleft$  is such that]
$\vdash \term{replaceleft}{1}((\phi \rightarrow \phi') \rightarrow ((\phi \wedge \psi) \rightarrow (\phi' \wedge \psi))),$ while
\item[$replaceright$ is such that]
$\vdash \term{replaceright}{1}((\psi \rightarrow \psi') \rightarrow ((\phi \wedge \psi) \rightarrow (\phi \wedge \psi')))$;
\item[We define $replace^k_{l}$ in the following way:]
$$replace^k_k = replaceright,$$ while 
for $l<k$, 
$$replace^k_{l} = trans \cdot replace^{k-1}_l \cdot replaceleft.$$ 
Then it is not hard to see by induction on $k-l$ that 
$$\vdash \term{replace^k_l}{1} ((\phi_l \rightarrow \phi'_l) \rightarrow ((\phi_1 \wedge  \cdots \wedge \phi_l \wedge \cdots \wedge \phi_k) \rightarrow (\phi_1 \wedge \cdots \wedge \phi'_l \wedge \cdots \wedge \phi_k))).$$
\item[We define $mphypoth$]  to be such that $$\vdash \term{mphypoth}{1} ((\phi \rightarrow \psi) \rightarrow ((\phi \rightarrow (\psi \rightarrow \chi)) \rightarrow (\phi \rightarrow \chi))).$$

\item[We use justification variables $\many{var}{a_r}, rel_r$]  for every $r \in [m]$. 
\item[For $1 \leq r \leq m$ we define $ gather_r $ in the following way: ] 
\[gather_r = [append \cdot [append \cdots [ append \cdot var_1 ] \cdots var_{a_r}] \cdot rel_r], \]
For every $1 \leq j \ \leq a_r + 1$, let $v_j, v' _j \in \{ \top,\bot \}$.
Then, for propositional variables $\many{p}{a_r}$,
\[
\bigwedge_ {j = 1}^{a_r}\term{var_j}{1}[p_j]^{v_j}  \wedge \term{rel_r}{1} [R_r]^{v_{a_r + 1}}  \vdash 
\term{gather_r}{1} ([p_1]^{v'_1} \wedge \cdots \wedge [p_{a_r}]^{v'_{a_r}} \wedge [R_r]^{v'_{a_r+1}})
\]
if and only if for every $1 \leq j \leq a_r + 1$, $v_j = v'_j$ (see the proof of Lemma \ref{lem:internalizetarski}).
In fact it is not hard to see that if
$$\bigwedge_ {j = 1}^{a_r}\term{var_j}{1}[p_j]^{v_j}  \wedge \term{rel_r}{1} [R_r]^{v_{a_r + 1}}  \vdash 
\term{gather_r}{1} \chi,
$$ 
then $\bigwedge_ {j = 1}^{a_r}[p_j]^{v_j}  \wedge  [R_r]^{v_{a_r + 1}} \vdash \chi$:  operator $!$ does not appear in $gather_r$, so the right-hand side of a corresponding $*$-calculus derivation for $*_1(gather_r,\chi)$ is a propositional derivation of $\chi$ from $[p_1]^{v_1}, \ldots, [p_{a_r}]^{v_{a_r}}, [R_r]^{v_{a_r+1}}$ and some propositional tautologies.

To give some intuition, conjunction $\bigwedge_ {j = 1}^{a_r}\term{var_j}{1}[p_j]^{v_j}  \wedge \term{rel_r}{1} [R_r]^{v_{a_r + 1}}$ means that $(\many{v}{a_r}) \in R_r$ in a corresponding first-order model.
\item[We use justification variables $value_z$ and $match(z,p_l)$] for all $z \in Z$, $l \in \alpha$.
For every $z \in X$, we define $V_z = \term{value_z}{1}[z]^\top \vee \term{value_z}{1}[z]^\bot $;
%
%
for every $z \in Y$, $V_z = \term{value_z}{1}[z]^\top \wedge \term{value_z}{1}[z]^\bot $.

We also define
$$Match =  \bigwedge_{\substack{l\in \alpha\\ z \in Z \\ \triangle \in \{ \top, \bot \} }} 
\term{match(z,p_l)}{1}([z]^\triangle \rightarrow ([p_l]^\triangle \rightarrow ok_l)) 
$$
\item[
For every $ R_r(\vec{z})$ which appears in $\psi$ and $0\leq b \leq a_r$, we define $match_b^{R_r(\vec{z})}$] in the following way: $match_0^{R_r(\vec{z})} = addhyp \cdot gather_r$ and if $b>0$ and $z_b = x_l$ or $z_b = y_{l-k}$, 
then $match^{R_r(\vec{z})}_b$ is defined  to be the term
\[
[
mphypoth \cdot
match^{R_r(\vec{z})}_{b-1}
\cdot
[tran \cdot [tran \cdot project^{\rho_1}_l \cdot match(z_b,b)  ] \cdot replace^{a_r+1}_b ]
].
\]

We can see by induction on $b$ that for every $0\leq b \leq a_r$,
\[ Match,\ \term{gather_r}{1}( [p_{1}]^{v'_{1}}\wedge \cdots \wedge [p_{a_r}]^{v'_{a_r}} \wedge [R_r]^{v_{a_r +1}}) \vdash  
\qquad \qquad\qquad\qquad\qquad\qquad
\]
\[ \vdash \term{match^{R_r(\many{z}{a_r})}_b}{1}\left(
(
[x_1]^{v_1^{\phantom{l}}}\wedge \cdots \wedge [x_k]^{v_k}\wedge [y_1]^{v_{k+1}}\wedge \cdots \wedge [y_{k'}]^{v_{k'+k}}
) 
\rightarrow \right.
\]\[\qquad \qquad\qquad
\rightarrow \left.
(ok_1 \wedge \cdots \wedge ok_b \wedge [p_{b+1}]^{v'_{b+1}}\wedge \cdots \wedge [p_{a_r}]^{v'_{a_r}} \wedge [R_r]^{v_{a_r +1}})\right)  \]
if and only if for every $j \in [a_r]$ and $j' \in [k+k']$, if $z_j = x_{j'}$ or $z_j = y_{j'-k}$, then $v'_j=v_{j'}$.

$Match$ and term $match_b^{R_r(\vec{z})}$ are used to confirm that given an assignment $v$ for variables $\manyk{x},\many{y}{k'}$, a tuple $\vec{z} \in Z^{a_r}$, and a tuple $(\many{v'}{a_{r+1}}) \in \{\top,\bot \}^{a_{r+1}}$, that $(v(z_1),\ldots,v(z_{a_r})) = (\many{v'}{a_r})$, since this is a crucial condition to assert that $[R_r(\vec{z})]^{v_{a_{r+1}}}$ must be true (i.e. 
$R_r(\vec{z})$ is true iff $v_{a_{r+1}} = \top$).

\item[$T^!(match^{R_r(\vec{z})}_b) $ is defined in the following way:] $ $ \\
$T^!(match^{R_r(\vec{z})}_0) = c_\cdot \cdot !addhyph \cdot ! gather_r$ and for $b>0$ and $z_b = y_{l-k}$,  \\
$T^!(match^{R_r(\vec{z})}_b) = $
\[
c_\cdot \cdot 
[c_\cdot \cdot 
! mphypoth \cdot
T^!(match^{R_r(\vec{z})}_{b-1})]
\cdot
! [ tran \cdot [tran \cdot project^{\rho_1}_l \cdot match(y_l,b)  ] \cdot replace^{a_r+1}_b ]
\]

We can see by induction on $b$ that for every $0\leq b \leq a_r$,
\[ Match,\ \term{!gather_r}{2}\term{gather_r}{1}( [p_{1}]^{v'_{1}}\wedge \cdots \wedge [p_{a_r}]^{v'_{a_r}} \wedge [R_r]^{v_{a_r +1}}) \vdash  
\qquad \qquad\qquad
\]
\[ \vdash \term{T^!(match^{R_r(\vec{z})}_b)}{2}\term{match^{R_r(\vec{z})}_b}{1}\left(
\bigwedge [x_i]^{v_i}
\wedge 
\bigwedge [y_i]^{v_{k+i}}
\rightarrow \right.
\]\[\qquad \qquad
\rightarrow \left.
(ok_1 \wedge \cdots \wedge ok_b \wedge [p_{b+1}]^{v'_{b+1}}\wedge \cdots \wedge [p_{a_r}]^{v'_{a_r}} \wedge [R_r]^{v_{a_r +1}})\right)  \]
if and only if 
\[ Match,\ \term{gather_r}{1}( [p_{1}]^{v'_{1}}\wedge \cdots \wedge [p_{a_r}]^{v'_{a_r}} \wedge [R_r]^{v_{a_r +1}}) \vdash  
\qquad \qquad\qquad
\]
\[ \vdash \term{match^{R_r(\many{z}{a_r})}_b}{1}\left(
\bigwedge [x_i]^{v_i}
\wedge 
\bigwedge [y_i]^{v_{k+i}}
\rightarrow \right.
\]\[\qquad \qquad
\rightarrow \left.
(ok_1 \wedge \cdots \wedge ok_b \wedge [p_{b+1}]^{v'_{b+1}}\wedge \cdots \wedge [p_{a_r}]^{v'_{a_r}} \wedge [R_r]^{v_{a_r +1}})\right),  \]
which in turn, as we have seen above, is true
if and only if for every $j \in [a_r]$ and $j' \in [k+k']$, if $z_j = x_{j'}$ or $z_j = y_{j'-k}$, then $v'_j=v_{j'}$.
\end{description}

Using the terms (and formulas) we have defined above, we can construct terms $T^a$, where $0<a\leq \rho_1$ and eventually $t^\phi$:

Let $\Psi = \{\many{\psi}{l}\}$ be an ordering of all subformulas of $\psi$ and of variables $\manyk{x},\many{y}{k'}$, which extends the ordering $\manyk{x},\many{y}{k'}$, such that if $a < b$, then $|\psi_a| \leq |\psi_b|.$\footnote{assume a $|\cdot |$, such that $|x_j| = |y_j|=0$, $|R_j(\many{v}{a_j})| = 1$ and if $\gamma$ is a proper subformula of $\delta$, then $|\gamma|<|\delta|$} 
Furthermore, $\rho_0 = |\{a\in [l] \mid \left|{\psi_a}\right| = 0\}|$ ($=k+k'$) and $\rho_1 = |\{a\in [l] \mid \left|{\psi_a}\right| = 1\}|$.

Let $ T^1 = value_{z_1} $ and for every $1< a \leq \rho_0$, $T^a = [append \cdot T^{a-1} \cdot value_{z_a} ]$.
It is not hard to see that for $\manyk{v} \in \{\top,\bot \}$,
\begin{equation}\label{eq:rho-buildingvariables2}
\term{value_{z_1}}{1}[z_1]^{v_1}, \ldots, \term{value_{z_k}}{1}[z_k]^{v_k} \vdash \term{T^{\rho_0}}{1}([z_1]^{v_1}\wedge \cdots \wedge [z_k]^{v_k}). 
\end{equation}

For every $a \in [l]$, 
\begin{description}
\item[if $\psi_a = R_r(z_1^a,\ldots,z^{a}_{a_r})$,] then 
\[ Eval_a = \term{truth_a}{2}(\term{[match_{a_r}^{\psi_a} \cdot T^{\rho_0}]}{1}({ok}_1 \wedge \cdots \wedge {ok}_{a_r} \wedge [R_r]^{\top})\rightarrow [\psi_a]^{\top}) \wedge \] 
\[\wedge  \term{truth_a}{2}(\term{[match_{a_r}^{\psi_a} \cdot T^{\rho_0}]}{1}({ok}_1 \wedge \cdots \wedge {ok}_{a_r} \wedge [R_r]^{\bot})\rightarrow [\psi_a]^{\bot});\]
\item[if $\psi_a = \neg\gamma$,] then 
\[ Eval_a = \term{truth_a}{2}([\gamma]^{\top}\rightarrow [\psi_a]^{\bot}) \wedge  \term{truth_a}{2}([\gamma]^{\bot}\rightarrow [\psi_a]^{\top});\]
\item[if $\psi_a = \gamma \vee \delta$,] then \[Eval_a = \term{truth_a}{2}([\gamma]^{\top} \wedge [\delta]^{\top}\rightarrow [\psi_a]^{\top}) \allowbreak \wedge \term{truth_a}{2}([\gamma]^{\top} \wedge [\delta]^{\bot}\rightarrow [\psi_a]^{\top}) \]\[ \wedge\ \term{truth_a}{2}([\gamma]^{\bot} \wedge [\delta]^{\top}\rightarrow [\psi_a]^{\top}) \allowbreak \wedge \term{truth_a}{2}([\gamma]^{\bot} \wedge [\delta]^{\bot}\rightarrow [\psi_a]^{\bot}); \]
\item[if $\psi_a = \gamma \wedge \delta$,] then \[Eval_a = \term{truth_a}{2}([\gamma]^{\top} \wedge [\delta]^{\top}\rightarrow [\psi_a]^{\top}) \allowbreak \wedge \term{truth_a}{2}([\gamma]^{\top} \wedge [\delta]^{\bot}\rightarrow [\psi_a]^{\bot}) \]\[ \wedge\ \term{truth_a}{2}([\gamma]^{\bot} \wedge [\delta]^{\top}\rightarrow [\psi_a]^{\bot}) \allowbreak \wedge \term{truth_a}{2}([\gamma]^{\bot} \wedge [\delta]^{\bot}\rightarrow [\psi_a]^{\bot}) ; \]
\item[if $\psi_a = \gamma \rightarrow \delta$,] then \[Eval_a = \term{truth_a}{2}([\gamma]^{\top} \wedge [\delta]^{\top}\rightarrow [\psi_a]^{\top}) \allowbreak \wedge \term{truth_a}{2}([\gamma]^{\top} \wedge [\delta]^{\bot}\rightarrow [\psi_a]^{\bot}) \]\[ \wedge\ \term{truth_a}{2}([\gamma]^{\bot} \wedge [\delta]^{\top}\rightarrow [\psi_a]^{\top}) \allowbreak \wedge \term{truth_a}{1}([\gamma]^{\bot} \wedge [\delta]^{\bot}\rightarrow [\psi_a]^{\top}). \]
\end{description}
Let $Eval = \bigwedge_{a = \rho_0 +1}^l Eval_a$.
%

For $\rho_0 < a \leq \rho_1$, we define $gathrel_a$ in the following way: $$gathrel_{\rho_0 +1} = c_\cdot \cdot  T^!(match^{\psi_a}_{a_{r_a}})$$ and for $\rho_0 +1 < a \leq \rho_1$, 
$$gathrel_{\rho_0 +1} = appendconc \cdot gathrel_{a-1} \cdot [c_\cdot \cdot T^!(match^{\psi_a}_{a_{r_a}})]. $$

Then, 
$$T^{\rho_0 +1} = replace^{\rho_1 - \rho_0}_1 \cdot truth_{\rho_0 +1} \cdot [gathrel_{\rho_1} \cdot !T^{\rho_0}]$$ 
and for $\rho_0+1 < a \leq \rho_1$, 
$$T^{a} = replace^{\rho_1 - \rho_0}_a \cdot truth_{\rho_0 +1} \cdot T^{a-1}.$$

%

if $\psi_a = \neg \psi_{2}$, then 
$$T^a =     hypappend \cdot [trans\cdot proj_j^{a - \rho_0 - 1} \cdot truth_a] \cdot T^{a-1} \text{ and }$$
if $\psi_a = \psi_{b} \circ \psi_c$, then 
$$T^a =     hypappend \cdot [trans\cdot [appendconc \cdot proj_b^{a - \rho_0 - 1} \cdot proj_c^{a-1}] \cdot truth_a] \cdot T^{a-1}.$$

We then define $t^\phi = [right \cdot T^l]$. 

\begin{lemma}\label{lem:startomodel}
For every $b \in [\rho_1]$, $j \in [a_{r_b}]$, let $\vec{l^b}=(\many{l^b}{a_{r_b}}) \in \{p_j, \neg p_j \}^{a_{r_b}}$ and $v^b \in \{\top,\bot \}$. Assume that
for every $b_1,b_2 \in [\rho_1]$, if $r_{b_1} = r_{b_2}$ and $\vec{l^{b_1}} = \vec{l^{b_2}}$, then it must also be the case that $v^{b_1} = v^{b_2}$. 
Then,\footnote{For convenience and to keep the notation tidy, we identify $\vec{l^b}$ with $l_1^{b}\wedge \cdots \wedge l_{a_{r_{b}}}^{b}$ and $\vec{ok}$ with $ok_1\wedge \cdots \wedge ok_{a_{r_{b}}}$.} 
\[ \bigwedge_{b \in [\rho_1]}\term{!gather_{r_b}}{2}\term{gather_{r_b}}{1} 
\left( \vec{l^{b}} \wedge [R_{r_b}]^b\right)
\wedge  Match \wedge Eval  \wedge \bigwedge_{z \in Z} \term{val_z}{1}[z]^{v_z}
\vdash \term{t^\phi}{2} [\phi]^{\top} \]
if and only if $\M \models \phi$ for every model $\M$ with universe $\{ \top, \bot \}$ and interpretation $\I$ such that
\begin{itemize}
\item 
for every $z \in Z$, $v_z = \I(z)$,
\item 
for every $b \in [\rho_1]$, $\M \models R_{r_b}(f(l_1^b),\ldots,f(l^b_{a_{r_b}}))$ iff $v^b = \top$,
\end{itemize}
where for all $j \in \alpha$,  $ f(p_j) = \top $ and $ f(\neg p_j) = \bot $.
\end{lemma}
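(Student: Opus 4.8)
The plan is to prove the biconditional by an induction on $a$ along the subformula ordering $\psi_1,\ldots,\psi_l$, establishing the invariant that, under the hypotheses of the lemma, the term $T^a$ derives (at the appropriate agent) the conjunction $[\psi_1]^{v_1}\wedge\cdots\wedge[\psi_a]^{v_a}$, where $v_b=\top$ exactly when $\psi_b$ is true in $\M$ under the assignment fixed by the $v_z$'s. This is the direct first-order analogue of the induction behind Lemma \ref{lem:internalizetarski}. Once it is in place the lemma follows by applying $right$ to project the final conjunct: since $t^\phi=[right\cdot T^l]$, the formula $\term{t^\phi}{2}[\phi]^\top$ is derivable iff the last conjunct carries the tag $\top$, i.e. iff $\M\models\phi$. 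Both directions of the overall ``iff'' are obtained simultaneously, because the inductive invariant is itself stated as an ``iff'' at each node.

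For the base cases $a\le\rho_0$ the variables are handled by (\ref{eq:rho-buildingvariables2}) together with the hypotheses $\term{val_z}{1}[z]^{v_z}$: the $append$-chain $T^{\rho_0}$ simply gathers the chosen truth value of each $z\in Z$ into a single conjunction, so the tag attached to $[z]$ is forced to be $v_z$. The compound cases $a>\rho_1$ are handled exactly as the corresponding step of Lemma \ref{lem:internalizetarski}: the clause $Eval_a$ (whose terms $truth_a$ sit at agent $2$) records every row of the truth table of the connective of $\psi_a$ exactly once, and the $proj$/$append$/$appendconc$/$tran$ terms feed the already-computed tags of the immediate subformulas into it, so $v_a$ is forced to be the propositionally correct value, in both directions.

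The substantive cases are the relational atoms $\rho_0<a\le\rho_1$, $\psi_a=R_r(\vec z)$, where the machinery specific to this reduction is used. Here I would invoke the three characterisations already established for $gather_r$, for $match^{R_r(\vec z)}_b$, and for $T^!(match^{R_r(\vec z)}_b)$. Chaining them shows that from $Match$ and the premise $\term{!gather_{r_b}}{2}\term{gather_{r_b}}{1}(\cdots)$ one derives a formula of the form $\term{T^!(match^{R_r(\vec z)}_{a_r})}{2}\term{match^{R_r(\vec z)}_{a_r}}{1}\bigl(\bigwedge[z]^{v}\to(\vec{ok}\wedge[R_r]^{v^b})\bigr)$ precisely when $\vec z$, evaluated under the assignment, matches the tuple recorded by $\vec{l^b}$; the relational clause of $Eval_a$ then turns this into $[\psi_a]^{v^b}$. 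The consistency hypothesis of the lemma (equal relation-and-tuple data forces equal $v^b$) guarantees that a single well-defined truth value is assigned to each atom, so the induction goes through with $v_a=\top$ iff $\M\models R_r(\vec z)$.

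The main obstacle is the converse content of these atomic steps, which is exactly what the $!$-based lifting is designed to secure and what distinguishes this argument from the $!$-free argument of Lemma \ref{lem:internalizetarski}. Because $t^\phi$ genuinely contains $!$, one cannot collapse the whole $*$-derivation to a single propositional derivation at agent $1$. Instead the point to verify is that the only way the calculus can deliver $\term{gather_r}{1}(\cdots)$-content at the relevant world is through the supplied premises $\term{!gather_{r_b}}{2}\term{gather_{r_b}}{1}(\cdots)$, whose $!$-prefix lets them propagate along agent $3$'s relation via $*\ver$Dis (since $2\inver3$), whereas the individual conjuncts $\term{var_j}{1}[p_j]^{v}$ and $\term{rel_r}{1}[R_r]^{v}$ cannot be transported there because $1\not\inver3,4$. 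This blocking is precisely what feeds the ``if and only if'' of the established $gather_r$ characterisation, so no spurious tuple/relation combination can be manufactured; confirming it (that stripping the $!$-layers via $*\ver$ and $*\ver$Dis leaves only $!$-free, hence purely propositional, agent-$1$ reasoning over the genuine premises) is the one delicate point, while everything else is the routine bookkeeping already rehearsed for Lemma \ref{lem:internalizetarski}.
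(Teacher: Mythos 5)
Your forward direction and the chaining of the three displayed characterisations of $gather_r$, $match^{R_r(\vec{z})}_b$, and $T^!(match^{R_r(\vec{z})}_b)$ match the paper, which likewise dispatches the \emph{if} direction by induction on the construction of $T^a$ and $t^\phi$. The gap is in the converse, where your ``main obstacle'' paragraph operates at the wrong level. Lemma \ref{lem:startomodel} is a frame-free derivability statement: its premises are exactly the conjuncts $\term{!gather_{r_b}}{2}\term{gather_{r_b}}{1}(\cdots)$, $Match$, $Eval$, and $\term{val_z}{1}[z]^{v_z}$. There are no premises of the form $\term{var_j}{1}[p_j]^{v}$ or $\term{rel_r}{1}[R_r]^{v}$, no worlds, and hence no role for $*\ver$Dis or for the fact that $1 \not\inver 3,4$. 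That transport-blocking argument belongs to the proof of the Theorem, where a concrete F-model is built and $\term{var_a}{1}$/$\term{rel_r}{1}$ evidence sits at states other than $2^\alpha$; importing it into the lemma leaves the lemma's actual converse unargued.

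What the converse needs instead, and what the paper supplies, is an analysis of the $*$-calculus derivation itself: (i) strip the agent-2 layer, so that the right-hand sides of a derivation of $*_2(t^\phi,[\phi]^\top)$ yield a derivation of $[\phi]^\top$ from $\bigwedge_b \term{gather_{r_b}}{1}(\cdots)$, $Match$, $Eval^{\#_2}$, and $\bigwedge_z \term{val_z}{1}[z]^{v_z}$; (ii) show that deriving an atomic tag $[R_r(\vec{z})]^\triangle$ forces deriving $\term{[match^{\psi_a}_{a_{r_a}}\cdot T^{\rho_0}]}{1}(\vec{ok}\wedge[R_{r_a}]^\triangle)$, which the paper obtains by countermodel adjustments (flipping the truth value of the atom, and exploiting the strong evidence property to neutralise $Eval^{\#_2}$); and (iii), crucially, invoke Lemma \ref{lem:universalchoicefor1variable}: since $gather_r$ occurs exactly once in $match^{\psi_a}_{a_{r_a}}$, at most one of the many gather premises can feed the derivation. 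Step (iii), absent from your proposal, is the only thing that rules out manufacturing a spurious tuple/tag combination by App-mixing two \emph{different} premises --- for example projecting $[p_1]^{\top}$ out of $\term{gather_r}{1}(\vec{l^{b_1}}\wedge[R_r]^{\top})$ while taking $[R_r]^{\bot}$ from $\term{gather_r}{1}(\vec{l^{b_2}}\wedge[R_r]^{\bot})$. The lemma's consistency hypothesis only makes the encoded relation well defined; it does not prevent such mixing. Relatedly, your claim that both directions fall out ``simultaneously'' from an iff-shaped invariant on $T^a$ is too optimistic: derivability of $\term{t^\phi}{2}[\phi]^\top$ does not decompose into derivability statements about each $T^a$ except by way of precisely this derivation analysis.
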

\begin{proof}

The \emph{if} direction is not hard to see by (induction on) the construction of the terms $T^a, t^\phi$.
For the other direction, notice that a $*$-calculus derivation for \[ \bigwedge_{b \in [\rho_1]}\term{!gather_{r_b}}{2} \term{gather_{r_b}}{1} \left(\vec{l^{b}}\wedge [R_{r_b}]^{v^b}\right), \qquad \qquad
\]
\[
\qquad \qquad 
Match,\  Eval, \ \bigwedge_{z \in Z} \term{val_z}{1}[z]^{v_z}
\vdash 
\term{t^\phi
}{2}
[\phi]^{\top}
\]
gives on the right hand side a derivation of 
\[ \bigwedge_{b \in [\rho_1]}\term{gather_{r_b}}{1} \left(\vec{l^{b}}\wedge [R_{r_b}]^{v^b}\right),
Match,\  Eval^{\#_2}, \ \bigwedge_{z \in Z} \term{val_z}{1}[z]^{v_z}
\vdash 
[\phi]^{\top}
\]
Some $\chi = [R_r(\vec{z}_r^a)]^{\triangle}$, where $R_r(\vec{z}_r^a) = \psi_a$, a subformula of $\phi$, can be derived from the assumptions above only if $\term{[match^{\psi_a}_{a_{r_a}} \cdot T^{\rho_0}]}{1}(\vec{ok}\wedge [R_{r_a}]^\triangle)$ can be derived as well -- notice that the assumptions cannot be inconsistent and we can easily adjust a model that does not satisfy $\term{[match^{\psi_a}_{a_{r_a}} \cdot T^{\rho_0}]}{1}(\vec{ok}\wedge [R_{r_a}]^\triangle)$ so that it does not satisfy $\chi$ either,  by simply changing the truth value of $\chi$.

The derivation of $\term{match^{\psi_a}_{a_{r_a}}}{1}(\vec{ok}\wedge [R_{r_a}]^\triangle)$ is not affected by $Eval^{\#_2}$: if there is a model that satisfies all assumptions except for $Eval^{\#_2}$ and not $\term{match^{\psi_a}_{a_{r_a}}}{1}(\vec{ok}\wedge [R_{r_a}]^\triangle)$, we can assume the strong evidence property and change the truth-values of every $[\psi_b]^{\triangle'}$ to \true, so the new model satisfies all the assumptions and not $\term{[match^{\psi_a}_{a_{r_a}} \cdot T^{\rho_0}]}{1}(\vec{ok}\wedge [R_{r_a}]^\triangle)$.

Therefore we have a $*$-calculus derivation of $\term{[match^{\psi_a}_{a_{r_a}} \cdot T^{\rho_0}]}{1}(\vec{ok}\wedge [R_{r_a}]^\triangle)$ and since $gather_r$ only appears once in $match^{\psi_a}_{a_{r_a}}$, there is some $b \in [\rho_1]$ such that (see Lemma \ref{lem:universalchoicefor1variable})
\[
\term{gather_{r_b}}{1} \left(\vec{l^{b}}\wedge [R_{r_b}]^{v^b}\right),
Match, \bigwedge_{z \in Z} \term{val_z}{1}[z]^{v_z}
\vdash 
\term{[match^{\psi_a}_{a_{r_a}}\cdot T^{\rho_0}]}{1}(\vec{ok}\wedge [R_{r_a}]^\triangle)
\]
Similarly, we can remove the terms from this derivation, so 
\[
\vec{l^{b}}, [R_{r_b}]^{v^b},
Match^{\#_1},\ \bigwedge_{z \in Z} [z]^{v_z}
\vdash 
\vec{ok}\wedge [R_{r_a}]^\triangle
\]
From which it is not hard to see that for all $z \in Z$, $v^b = \triangle$, so every first-order model as described in the Lemma satisfies $\chi$. Then it is not hard to see by induction that all such models satisfy all $[\psi_a]^{\triangle}$ derivable from these same assumptions.
\qed
\end{proof}

Now to construct the actual formula the reduction gives. For this let $\rho$ be a fixed justification variable. We define the following formulas.

\[start = \neg [active] \wedge \term{\rho}{3}\left( [active] \wedge \bigwedge_{a \in [\alpha]}\term{var_a}{1}\neg p_a \right)\]
\[forward_A = \term{\rho}{4}\left( \bigvee_{a \in [\alpha]} \term{var_a}{1}\neg p_a \wedge [active] \rightarrow \term{\rho}{3}[active] \right)  \]
\[forward_B = \term{\rho}{4} \bigwedge_{a \in [\alpha]} \left( \bigwedge_{b \in [a-1]} \term{var_b}{1}p_b \wedge \term{var_a}{1} \neg  p_a \wedge [active] 
\right. \]\[ \left. \qquad\qquad
\rightarrow \term{\rho}{3}\left( 
\bigwedge_{b \in [a-1]} \term{var_b}{1} \neg p_b \wedge \term{var_a}{1} p_a
\right)  \right)\]
\[forward_C = \term{\rho}{4} \bigwedge_{a \in [\alpha]} \left(  \bigvee_{b \in [a-1]} \term{var_b}{1}\neg p_{b} \wedge \term{var_a}{1} \neg  p_a \wedge [active] 
\right. \]\[ \left. \qquad\qquad
\rightarrow \term{\rho}{3} \term{var_a}{1} \neg p_a  \right)\]
\[forward_D = \term{\rho}{4} \bigwedge_{a \in [\alpha]} \left(  \bigvee_{b \in [a-1]} \term{var_b}{1}\neg p_{b} \wedge \term{var_a}{1}  p_a \wedge [active]
\right. \]\[ \left. \qquad\qquad
\rightarrow \term{\rho}{3} \term{var_a}{1} p_a  \right)\]
\[end = \term{\rho}{4} \left( \bigwedge_{a \in \alpha} \term{var}{1} p_a  \wedge [active] \rightarrow \term{\rho}{4} \neg [active]\right) \]
\[ choice_R = \term{\rho}{4}\left( [active] \rightarrow \term{rel_r}{1} [R_r]^\top \vee \term{rel_r}{1} [R_r]^\bot  \right) \]
\[ choice_V = \term{\rho}{4}\left( \neg [active] \rightarrow  \bigwedge_{z \in X}\left(\term{value_z}{1} [z]^\top \vee \term{value_z}{1} [z]^\bot  \right) 
\right. \]\[ \left. \qquad\qquad
\wedge \bigwedge_{z \in Y}\left(\term{value_z}{1} [z]^\top \wedge \term{value_z}{1} [z]^\bot  \right) \right) \]
\[ test = \term{\rho}{4} \left( \neg  [active] \rightarrow Match \wedge Eval \wedge \neg \term{t^{\phi}}{2} [\neg \phi]^T  \right) \]
Then, $\phi^J_{FO}$, the formula constructed by the reduction is the conjunction of these formulas above:
\[start \wedge forward_A \wedge forward_B \wedge forward_C \wedge forward_D \wedge end \wedge choice_R \wedge choice_V \wedge test. \]

\begin{theorem}
$\phi^J_{FO}$ is $J$-satisfiable if and only if $\phi$ is satisfiable by a two-element first-order model.
\end{theorem}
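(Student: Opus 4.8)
The plan is to prove the two implications separately, in both cases using Lemma~\ref{lem:startomodel} as the evaluation engine and Lemma~\ref{lem:universalchoicefor1variable} to realise the universal quantifier block. The scaffolding formulas $start$, $forward_A,\dots,forward_D$, $end$, $choice_R$, $choice_V$ have a single purpose: to force any satisfying F-model to contain an explicit binary counter whose \emph{active} states run through every tuple $v\in\{0,1\}^{|\alpha|}$, each equipped with a membership choice $\term{rel_r}{1}[R_r]^\top\vee\term{rel_r}{1}[R_r]^\bot$; the formula $test$ then runs, at the terminal inactive state, the first-order evaluation encoded by $t^\phi$. Thus the whole statement reduces to matching such counter-carrying F-models with two-element first-order models.

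For the direction from first-order satisfiability to $J$-satisfiability, I would start from a two-element model $\M$ with universe $\{\top,\bot\}$ and interpretation $\I$, together with witnesses $\I(x_1),\dots,\I(x_k)$ for the existential block, so that $\forall y_1\cdots\forall y_{k'}\,\psi$ holds in $\M$. I would build an F-model $\mathcal{N}$ consisting of a root, one active state $w_v$ for each binary tuple $v$ arranged in counter order, and a terminal inactive state $w$; set $R_1=R_2=\emptyset$, let $R_3$ realise the counter edges, and take $R_4$ to contain $R_3$ and be transitive (agent~$4$ has positive introspection). Using the strong evidence property I would take the minimal aef that makes true, at each $w_v$, the markers $\term{var_j}{1}[p_j]^{v_j}$ and $\term{rel_r}{1}[R_r]^\triangle$ (with $\triangle=\top$ iff the $a_r$-prefix of $v$ belongs to $\I(R_r)$) together with the wrapped witnesses $\term{!gather_r}{2}\term{gather_r}{1}(\cdots)$, which by Verification and Distribution closure (using $2\inver 3$) flow along $R_3$ to $w$; at $w$ I would record $\term{value_z}{1}[z]^{\I(z)}$ for $z\in X$ and both polarities for $z\in Y$, and install $Match$ and $Eval$. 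Checking the aef closure conditions and the truth of every conjunct except $test$ is routine; for $test$, recall that the minimal aef validates an expression exactly when it is $*$-derivable (Proposition~\ref{prp:proofbystarcalc}). Lemma~\ref{lem:startomodel} identifies the wrapped formulas reaching $w$ with the relations of $\M$, and since $\M\models\phi$, Lemma~\ref{lem:universalchoicefor1variable} shows that no $y$-valuation renders $\term{t^\phi}{2}[\neg\phi]^\top$ derivable, so $\neg\term{t^\phi}{2}[\neg\phi]^\top$ holds at $w$.

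For the converse, suppose $\mathcal{N},s\models\phi^J_{FO}$; by Proposition~\ref{cor:exponentialstates} I may take $\mathcal{N}$ finite and with the strong evidence property. I would first show the scaffolding forces the counter: $start$ yields an active all-zeros successor, $forward_A$ through $forward_D$ implement the increment step along $R_3$ (pushed from agent~$4$ to agent~$3$ via $4\supset 3$ and stabilised by positive introspection of $4$), and $end$ terminates the chain at an inactive state $w$ once the all-ones tuple is reached, the successors existing by seriality of $R_3,R_4$ (from $F(3)=F(4)=\jd$). At every active state $choice_R$ pins one polarity of each $\term{rel_r}{1}[R_r]$, and \emph{only} the agent-$2$ wrapped formulas $\term{!gather_r}{2}\term{gather_r}{1}(\cdots)$ travel along $R_3$ to $w$, since $1\not\inver 3$ whereas $2\inver 3$. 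Reading these off at $w$ defines a two-element model $\M$ satisfying the consistency hypothesis of Lemma~\ref{lem:startomodel}; then $choice_V$ supplies the existential $x$-values and both $y$-polarities, and the conjunct $\neg\term{t^\phi}{2}[\neg\phi]^\top$ together with Lemmas~\ref{lem:startomodel} and \ref{lem:universalchoicefor1variable} forces that no $y$-valuation falsifies the matrix, whence $\M\models\phi$.

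The main obstacle is the converse direction, and within it two points. First, one must show the counter formulas genuinely force all $2^{|\alpha|}$ distinct tuples to be represented, each with a \emph{single, consistent} membership choice per relation; this is precisely the hypothesis of Lemma~\ref{lem:startomodel} that equal relation indices and equal literal tuples entail equal truth values, and it must be squeezed out of $choice_R$ and the determinism of the increment step. Second, and most delicate, is the leak-freeness argument: no bare conjunct $\term{var_j}{1}[p_j]^\triangle$ or $\term{rel_r}{1}[R_r]^\triangle$ may reach the test state $w$ through the $*$-calculus, for otherwise $\term{gather_r}{1}(\cdots)$ could be assembled for spurious tuples and the extracted relation corrupted. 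That this cannot happen rests entirely on the interaction pattern of $J_H$ --- $1\not\inver 3,4$ against $2\inver 3$, the Distribution closure, and the consistency of agents $3$ and $4$ --- which is exactly where the reduction exploits the structure of the logic.
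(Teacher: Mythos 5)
Your proposal follows essentially the same route as the paper's proof: for one direction the same explicit counter model ($R_1=R_2=\emptyset$, $R_3$ a chain running through all $2^{|\alpha|}$ tuples into a terminal test state, $R_4$ its transitive companion, a minimal aef whose contents are characterized by $*$-derivability), and for the converse the same induction showing that $start$, the $forward$ formulas, $end$, and the $choice$ formulas force the counter structure and pin the aef, with the strong evidence property (Proposition~\ref{cor:exponentialstates}) plus Lemmas~\ref{lem:startomodel} and~\ref{lem:universalchoicefor1variable} doing the evaluation in both halves. One small correction: the leak-freeness argument you assign to the converse is in fact where the \emph{forward} direction does its hard work --- there you must rule out every spurious $*$-derivation of $2^\alpha\ *_2(t^\phi,[\neg\phi]^\top)$ (using that bare agent-$1$ marker expressions are confined to their own active state, since $1\not\inver 3,4$, and that $\rho$ does not occur in $t^\phi$), whereas in the converse non-derivability is already given by $test$ and the strong evidence property, and one only needs to \emph{exhibit} a derivation from premises whose aef-membership at the test state follows from the closure conditions, so the one-way inclusions suffice.
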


\begin{proof}
First, assume  $\phi$ is satisfiable by two-element first-order model, say $\M$ with interpretation $\I$, and assume that for every $a \in [k]$, $\I(x_a)$ is such that $\M \models \many{\forall y}{k'} \psi$. We construct a $J$-model for $\phi^J_{FO}$: $$\M_J = (W,R_1,R_2,R_3,R_4,\E,\V) \text{, where: }$$ 
\begin{itemize}
\item
$W = \{ \sigma \in \N  \mid  \sigma+2 \in [2^\alpha + 2] \}$ (i.e. $\sigma \in \{ -1, 0, 1, 2, \ldots 2^\alpha  \}$); 
\item 
$R_1 = R_2 = \emptyset$, $R_3 = \{ (\sigma,\sigma+1)  \mid  \sigma <2^\alpha \}\cup \{(2^\alpha,2^\alpha) \}$, and\\ $R_4 = \{ (\sigma,\sigma')  \mid  \sigma < \sigma' \} \cup \{(2^\alpha,2^\alpha) \}$;
\item 
$\A$ is minimal such that 
\begin{itemize} \item
$\A_3(\rho,\chi) = \A_4(\rho,\chi) = W$ for any formula $\chi$, 
\item
$\A_1(var_a,p_a) = \{ \sigma \in W \mid \sigma+1 \in [2^\alpha] \text{ and }  bin_a(\sigma) = 1  \}$,
\item 
$\A_1(var_a,\neg p_a) = \{ \sigma \in W \mid  \sigma+1 \in [2^\alpha] \text{ and }  bin_a(\sigma) = 0  \}$,
\item 
$\A_1(rel_r,[R_r]^\top) = \{ \sigma \in W \mid  \sigma+1 \in [2^\alpha] $ and \\ \hfill $  \M \models R_r(bin_0(\sigma),\ldots,bin_{a_r}(\sigma))  \}$,
\item 
$\A_1(rel_r,[R_r]^\bot) = \{ \sigma \in W \mid  \sigma+1 \in [2^\alpha] $ and \\ \hfill $ \M \not\models R_r(bin_0(\sigma),\ldots,bin_{a_r}(\sigma))  \}$,
\item for every $a \in [k]$,
$\A_1(value_{x_a},[x_a]^\top) = \{ 2^\alpha \}$, if $\I(x_a) = \top$ and $\emptyset$ otherwise,
\item for every $a \in [k]$,
$\A_1(value_{x_a},[x_a]^\bot) = \{ 2^\alpha \}$, if $\I(x_a) = \bot$ and $\emptyset$ otherwise,
\item for every $a \in [k']$,
$\A_1(value_{y_a},[y_a]^\top) =  \A_i(value_{y_a},[y_a]^\bot) = \{ 2^\alpha \}$, and
\item $\M_J,2^\alpha \models Match, Eval$;
\end{itemize}
\item $\V([active]) = \{ \sigma \in W \mid \sigma+1 \in [2^\alpha] \}$ and for any other propositional variable $q$, $V(q) = \emptyset$.
\end{itemize}
It is not hard  to verify that $\M_J,-1 \models \phi_{FO}^J$, as long as we establish that $\M_J,2^\alpha \not \models \term{t^\phi}{2}[\neg \phi]^T$, for which it is enough that $2^\alpha \notin \A_j(t^\phi,[\neg \phi]^\top )$. 

The definition of $\A$ is equivalent to $\sigma \in \A_g(s,\chi) \Leftrightarrow S \vdash_* \sigma\ *_g(s,\chi)$, where $S = $
\[ \{
w\ *_3(\rho,F)\mid w \in W, F \text{ a formula}\}\cup \{
w\ *_4(\rho,F)\mid w \in W, F \text{ a formula}\}\ \cup \]\[ \{
w\ *_1(var_a,p_a)\mid w+1 \in [2^\alpha] \text{ and }  bin_a(w) = 1\}\ \cup \]\[ \{
w\ *_1(var_a,\neg p_a)\mid w+1 \in [2^\alpha] \text{ and }  bin_a(w) = 0\}\ \cup \]\[\{
w\ *_1(rel_r,[R_r]^\top)\mid  w+1 \in [2^\alpha] \text{ and }  \M \models R_r(bin_0(w),\ldots,bin_{a_r}(w))  \}\ \cup \]\[ \{
w\ *_1(rel_r,[R_r]^\bot)\mid  w+1 \in [2^\alpha] \text{ and }  \M \not \models R_r(bin_0(w),\ldots,bin_{a_r}(w))  \}\ \cup  \]\[ \{
2^\alpha\ *_1(value_{x_a},[x_a]^\top) \mid a\in [k],\ \I(x_a) = \top\}\ \cup\]\[ \{
2^\alpha\ *_1(value_{x_a},[x_a]^\bot) \mid a\in [k],\ \I(x_a) = \bot\} \ \cup  \]\[ \{
2^\alpha\ *_1(value_{y_a},[y_a]^\top) \mid a\in [k']\} \cup \{
2^\alpha\ *_1(value_{y_a},[y_a]^\bot) \mid  a\in [k'] \} \ \cup  \]\[ \{
2^\alpha\ e \mid e \in *Eval \cup *Match
\} \]
Then, $2^\alpha \in \A_2(t^\phi,[\neg \phi]^\top )$ iff $S \vdash_* 2^\alpha *_2(t^\phi,[\neg \phi]^\top )$. 
Notice the following: since $t^\phi$ does not have $\rho$ as a subterm, the $*$-expressions in
\[ \{
w\ *_3(\rho,F)\mid w \in W, F \text{ a formula}\}\cup \{
w\ *_4(\rho,F)\mid w \in W, F \text{ a formula}\} \]
cannot be a part of a derivation for $S \vdash_* 2^\alpha *_2(t^\phi,[\neg \phi]^\top )$. 

Since $1 \inver 2 \inver 3$ and $1,2$ do not interact with any agents in any other way, for any term $s$ with no $!$, if for some $a$ or $r$, $var_a$ or $rel_r$ are subterms os $s$, if $S \vdash_* w\ \term{s}{a}\chi$, then $a=1$, $0\leq w < 2^\alpha$, and $\{ w\ e \in S \} \vdash_*  w\ \term{s}{1}\chi$.
$t^\phi$ includes exactly one $!gather_{r_b}$ for every $b$ and one of $value_z$ for every $z \in Z$. Therefore, if 
$S \vdash_* 2^\alpha *_2(t^\phi,[\neg \phi]^\top )$, then there are
\[ \bigwedge_{b \in [\rho_1]}\term{!gather_{r_b}}{2}\term{gather_{r_b}}{1} 
\Phi \wedge  Match \wedge Eval  \wedge \bigwedge_{z \in Z} \term{val_z}{1}[z]^{v_z}
\vdash \term{t^\phi}{2} [\neg\phi]^{\top} \]
and by Lemma \ref{lem:startomodel}, $\M \models \neg \phi$,  a contradiction.

On the other hand, let there be some $\M'_J$ where $\phi^J$ is satisfied. Then, we name $-1$ a state where $\M'_J,-1 \models \phi^J$ and let $-1 R_3 0 R_3 1 R_3 \cdots R_3 2^\alpha$. Then, 
\begin{itemize} 
\item
$\A_1(var_a,p_a) \subseteq \{ \sigma \in W \mid \sigma+1 \in [2^\alpha] \text{ and }  bin_a(\sigma) = 1  \}$,
\item 
$\A_1(var_a,\neg p_a) \subseteq \{ \sigma \in W \mid  \sigma+1 \in [2^\alpha] \text{ and }  bin_a(\sigma) = 0  \}$,
\item $\M_J,2^\alpha \models Match, Eval$ and for every $a \in [k'],\\ \M_J,2^\alpha \models \term{value_{y_a}}{1}[y_a]^\top, \term{value_{y_a}}{1}[y_a]^\bot$;
\end{itemize}
as we can see by induction on $\sigma$ - the conditions on $A_1(var_a,p_a),A_1(var_a,\neg p_a)$ as imposed by $forward_B,\ forward_C,\ forward_D$ are positive. Notice here that if for some $0\leq w<2^\alpha -1$, $w \in \bigcap_{a \in \alpha}\A_1(var_a,p_a)$, then we have a contradiction: $w+1 \models \neg [active]$ and if $w$ is minimal for this to happen, then $w \models [active]$, so since there is some $a$ s.t. $w \in \A_1(var_a,\neg p_a)$, $w+1 \models [active]$ (by $forward_A$).

Then, $\{w\mid w+1 \in [2^\alpha] \} \subseteq \A_1(rel_r,[R_r]^\top) \cup \A_1(rel_r,[R_r]^\bot)$ and then we can define a first-order model $\M$ such that:
\begin{itemize}
\item 
$\A_1(rel_r,[R_r]^\top) \subseteq \{ \sigma \in W \mid  \sigma+1 \in [2^\alpha] \text{ and }  \M \models R_r(bin_0(\sigma),\ldots,bin_{a_r}(\sigma))  \}$,
\item 
$\A_1(rel_r,[R_r]^\bot) \subseteq \{ \sigma \in W \mid  \sigma+1 \in [2^\alpha] \text{ and }  \M \not\models R_r(bin_0(\sigma),\ldots,bin_{a_r}(\sigma))  \}$,
\item for every $a \in [k]$,
$\A_1(value_{x_a},[x_a]^\top) \subseteq \{ 2^\alpha \}$, if $\I(x_a) = \top$ and $\emptyset$ otherwise,
\item for every $a \in [k]$,
$\A_1(value_{x_a},[x_a]^\bot) \subseteq \{ 2^\alpha \}$, if $\I(x_a) = \bot$ and $\emptyset$ otherwise,
\end{itemize}
Since it must be the case that $\M_J, 2^\alpha \not \models \term{t^{\phi}}{2}{[\neg \phi]}$, it cannot be the case that 
\[ \bigwedge_{b \in [\rho_1]}\term{!gather_{r_b}}{2}\term{gather_{r_b}}{1} 
\Phi \wedge  Match \wedge Eval  \wedge \bigwedge_{z \in Z} \term{val_z}{1}[z]^{v_z}
\vdash \term{t^\phi}{2} [\neg\phi]^{\top} \]
and since $\M$ satisfies the conditions from Lemma \ref{lem:startomodel}, $\M \not \models \neg \phi$.
\qed
\end{proof}

Theorem \ref{thm:nexphard} is then a direct consequence.

\section{Final Remarks}

We gave two lower bounds for the complexity of the satisfiability problem for Justification Logic. Theorem \ref{thm:S2lower} gives a general lower bound which applies to all logics in the family, while  Theorem \ref{thm:nexphard} gives a lower bound for a specific logic in the family. 
From a technical point of view, the reduction from a fragment of QBF that we used for the first result is a simplification of the reduction from a fragment of First-order Satisfiability that we used for the second result. 

The merits of the general $\Sigma_2^p$-hardness result is that we established an (expected) lower bound for all the logics in the family, which uses fewer assumptions than a previous proof of the same bound (for single-agent logics) by Buss and Kuznets in \cite{newlower}. That is, we require a schematic and axiomatically appropriate constant specification, while the proof in \cite{newlower} requires that 
it
is also schematically injective: each constant justifies at most one scheme. It is perhaps a subtle distinction, but it means that for the first time we  established this lower bound for justification logics \j, \jt, \jd, \jdf, and \lp, the versions of these single-agent logics with the total constant specification (i.e. the one where all constants justify all axioms).\footnote{If nothing else, this should simplify some of the notation.} The necessity of these properties of the constant specification for these results and their full effects on the complexity of Justification Logic remain to be seen, but some insightful observations were made in \cite{newlower}.

The \NEXP-hardness result we presented in this paper makes the general \NEXP-upper bound from \cite{Achilleos2014EUMAS} tight, thus answering the open question from there about whether there exists a \NEXP-complete logic or the upper bound can be improved. It also makes $J_H$ the first justification logic with known complexity having a harder satisfiability problem (assuming $\EXP \neq \NEXP$) than its corresponding modal logic. In fact, as Proposition \ref{prp:Mhisinexp}, if $M_H$ is the modal logic which corresponds to $J_H$ (the modal logic with the same frame restrictions as $J_H$), then $M_H$-satisfiability is in \EXP: we can simulate the tableau procedure from Table \ref{tab:MHsat} using an exponential time algorithm --  an alternating polynomial space one actually, where we use nondeterministic existential choices to apply the tableau rules and universal choices to select exactly one prefix $\sigma.(g,i)$ from $\sigma$ to explore.
While Modal Satisfiability has been studied extensively, we are not aware of anyone investigating specifically the complexity of $M_H$-satisfiability, so we provide a brief proof.

		\begin{table}[t]
					\begin{minipage}[c][13ex]{0.3\linewidth}
						\[ 
						\inferrule*{\sigma\ T\ \Diamond_i\psi}{\sigma.(g,i)\ T\ \psi }  
						\]
						where $(g,i)$ is new;
					\end{minipage}
					\hfill
		\begin{minipage}[c][13ex]{0.3\linewidth}
						\[ 
						\inferrule*{\sigma\ F\ \Diamond_i\psi}{\sigma.(g,i)\ F\ \psi }  
						\]			
						where $(g,i)$ has already appeared and $i<4$;
		\end{minipage}
		\hfill
		\begin{minipage}[c][13ex]{0.3\linewidth}
		\[ 
		\inferrule*{\sigma\ T\ \Box_i\psi}{\sigma.(g,i)\ T\ \psi }  
		\]
		where $(g,i)$ has already appeared  and $i<4$;
	\end{minipage}
	\hfill 
		\begin{minipage}[c][13ex]{0.3\linewidth}
			\[ 
			\inferrule*{\sigma\ F\ \Box_i\psi}{\sigma.(g,i)\ F\ \psi }  
			\]
			where $(g,i)$ is new;
		\end{minipage}
		\hfill 
			\begin{minipage}[c][13ex]{0.3\linewidth}
				\[ 
				\inferrule*{\sigma\ T\ \Box_i\psi}{\sigma\ T\ \Diamond_i\psi}  
				\]
				where $i \in \{3,4\}$;
			\end{minipage}
	\hfill 
	\begin{minipage}[c][13ex]{0.3\linewidth}
		\[ 
		\inferrule*{\sigma\ T\ \Box_4\psi}{\sigma\ T\ \Box_3\psi}  
		\]
	\end{minipage}
		\\ 
		\begin{minipage}[c][13ex]{0.3\linewidth}
			\[ 
			\inferrule*{\sigma\ T\ \Box_i\psi}{\sigma\ T\ \Box_j\Box_i\psi}  
			\]
			where $0<i<j<4$;
		\end{minipage}
							\hfill
							\begin{minipage}[c][20ex]{0.3\linewidth}
								\[ 
								\inferrule*{\sigma\ F\ \Diamond_4\psi}{\sigma.(g,4)\ F\ \psi \\\\ \sigma.(g,4) F\ \Diamond_4\psi }  
								\]			
								where $(g,i)$ has already appeared and $i \in \{3,4\}$;
							\end{minipage}
							\hfill
							\begin{minipage}[c][20ex]{0.3\linewidth}
								\[ 
								\inferrule*{\sigma\ T\ \Box_4\psi}{\sigma.(g,i)\ T\ \psi \\\\ \sigma.(g,i)\ T\ \Box_4\psi }  
								\]
								where $(g,i)$ has already appeared  and $i \in \{3,4\}$;
							\end{minipage}
		\caption{Tableau rules for $M_H$. To test $\phi$ for $M_H$-satisfiability, start from a branch which only contains $(0,0)\ T\ \phi$ and keep expanding according to the rules above. A branch with $\sigma\ T\ \psi$ and $\sigma\ F\ \psi$ is propositionally closed. A (possibly infinite) branch which is not propositionally closed, but is closed under the rules is an accepting branch.}
		\label{tab:MHsat}
\end{table}

\begin{proposition}\label{prp:Mhisinexp}
	Let $M_H$ be the four-modalities modal logic associated with the class of frames $(W,R_1,R_2,R_3,R_4)$ where $R_3,R_4$ are serial, $R_3 \subseteq R_4$, and for $(i,j) \in \{(1,2),(2,3),(4,4)\}$, if $aR_jbR_ic$, then $aR_ic$.
	Then, $M_H$-satisfiability is in \EXP.
\end{proposition}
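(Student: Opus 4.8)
The plan is to prove the characterization that $\phi$ is $M_H$-satisfiable if and only if the tableau of Table~\ref{tab:MHsat} started from $(0,0)\ T\ \phi$ admits an accepting branch, and then to show that the search for such a branch can be carried out in alternating polynomial space; since alternating polynomial space coincides with deterministic exponential time, this places $M_H$-satisfiability in \EXP. A preliminary step is to fix a closure set $cl(\phi)$ of signed formulas that is closed under all the rules of Table~\ref{tab:MHsat} and to check that it has size polynomial in $|\phi|$. The only rule that manufactures formulas outside the subformulas of $\phi$ is $\sigma\ T\ \Box_i\psi\Rightarrow\sigma\ T\ \Box_j\Box_i\psi$, and since it requires $0<i<j<4$ the stacked indices strictly increase inside $\{1,2,3\}$, so only boxings of bounded depth over subformulas are ever produced. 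Hence every tableau world (prefix $\sigma$) carries a label that is a polynomial-size subset of the signed formulas over $cl(\phi)$.

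Next I would establish soundness and completeness of the tableau. For completeness I would take a model $\M,w\models\phi$ and label each prefix $\sigma$, reached along a path of accessibility steps from $w$, by exactly the signed formulas true (resp.\ false) at the corresponding world; the frame conditions then guarantee that this labelling is closed under the rules and is never propositionally closed. For soundness I would read an accepting, rule-closed prefixed set back as a model whose worlds are the prefixes, whose relations $R_i$ are generated by the prefix-successor steps $\sigma\to\sigma.(g,i)$ and then closed under the required frame constraints (seriality of $R_3,R_4$, the inclusion $R_3\subseteq R_4$, transitivity of $R_4$, and the cross conditions $aR_jbR_ic\Rightarrow aR_ic$), with propositional truth given by the $T\ p$ labels, and then prove a truth lemma by induction on formula structure.

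Finally, for the complexity bound I would formalise the alternating procedure sketched before Proposition~\ref{prp:Mhisinexp}: at the current world the machine existentially saturates the label under the propositional and local rules and rejects if it becomes propositionally closed; it then universally selects a single demanded successor $\sigma.(g,i)$ (one for each obligation $T\ \Diamond_i\psi$ or $F\ \Box_i\psi$), forms the successor's label by propagating the $T\ \Box_i$, $F\ \Diamond_i$, seriality, $R_3\subseteq R_4$, and interaction rules, and recurses, keeping in memory only the current polynomial-size label together with an $O(|\phi|)$-bit depth counter. Since there are at most $2^{O(|\phi|)}$ distinct labels, a branch longer than that must repeat a label and can be folded into a finite model, so the machine may accept once the counter reaches this bound; this keeps the whole computation within polynomial space, and the coincidence of alternating polynomial space with \EXP\ finishes the argument.

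The hard part will be the interaction conditions, which bite from both sides. In the truth lemma for soundness I must verify that the model obtained by closing the prefix-successor relations under transitivity of $R_4$, under $R_3\subseteq R_4$, and under $aR_jbR_ic\Rightarrow aR_ic$ still respects every label -- this is exactly what the box-stacking rule $\Box_i\psi\Rightarrow\Box_j\Box_i\psi$, the seriality rules for agents $3$ and $4$, and the rule $\Box_4\psi\Rightarrow\Box_3\psi$ are there to compensate, and matching them precisely to the closure operations is delicate. Dually, on the algorithmic side the transitivity of $R_4$ makes branches potentially infinite, so the termination and loop-detection argument -- bounding branch length by the number of labels and justifying that reaching that bound can be treated as success without breaking soundness -- is the step that most needs care.
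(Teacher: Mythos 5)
Your proposal is correct and follows essentially the same route as the paper's proof: establish soundness and completeness of the tableau of Table~\ref{tab:MHsat} (building a model from an accepting branch with relations closed under the frame conditions, and conversely guiding the nondeterministic choices by a satisfying model), then simulate the tableau in alternating polynomial space by keeping only one prefix's label in memory, universally choosing a single successor, and cutting off branches once the $2^{O(|\phi|)}$ bound on distinct labels forces a repetition, concluding via {\sf A}$\PSPACE = \EXP$. Your explicit closure-set analysis of the box-stacking rule and your flagged concerns (matching the relation closures to the compensating rules, and justifying acceptance at the repetition bound) are exactly the delicate points the paper treats tersely.
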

	\begin{proof}[Brief]
		We first prove that the tableau procedure from Table \ref{tab:MHsat} is sound and complete. 
		From an accepting branch for $\phi$ we can construct a model for $\phi$: let $W$ be the set of prefixes that have appeared in the branch; let $a \in \V(p)$ iff $a\ T\ p$ has appeared in the branch, let for $i=1,2,3,4$, $r_i = \{(a,a.(g,i)) \in W\times W \}$, for $i = 1,2$, $R_i=r_i$, $R_3$ is the transitive closure of $r_3$, and $R_4$ is the transitive closure of $r_3 \cup r_4$.
		It is not hard to verify that model $\M = (W,R_1,R_2,R_3,R_4)$ satisfies all necessary conditions and that $\M,(0,0) \models \phi$ -- by inductively proving that if $a\ T\ \psi$ in the branch then $\M,a \models \psi$ and if $a\ F\ \psi$ in the branch then $\M,a \not\models \psi$.
		
		On the other hand, from a model $\M = (W,R_1,R_2,R_3,R_4)$ for $\phi$ we can make appropriate nondeterministic choices to construct an accepting branch for $\phi$.
		We map $(0,0)$ to a state $w^{(0,0)}$ such that $\M,w^{(0,0)} \models \phi$; then, when $\sigma.(g,i)$ appears first, it must be because of a formula of the form $\sigma\ T\ \Diamond_i \psi$ (or $\sigma\ F\ \Box_i \psi$, but it is essentially the same case). If $\M,w^\sigma \models \Diamond_i \psi$, then there must be some state $w^{\sigma} R_i w$, such that $\M \models \psi$ and thus we name $w = w^{\sigma.(g,i)}$.
		It is not hard to see that we can make such choices when applying the rules, so that if $a\ T\ \psi$ in the branch then $\M,w^a \models \psi$ if $a\ F\ \psi$ in the branch then $\M,w^a \not\models \psi$. In fact the rules of Table \ref{tab:MHsat} preserve this condition right away; we just need to make sure that the same thing happens with the propositional rules -- for instance, rule $\frac{\sigma\ T\ \psi \vee \chi}{\sigma\ T\ \psi\ \mid\ \sigma\ \T\ \chi}$ can make an appropriate choice depending on whether $\M,w^\sigma \models \psi$ or $\M,w^\sigma \models \chi$. Thus the constructed branch cannot be propositionally closed.
		
		What remains is to show that this tableau procedure can be simulated by an alternating algorithm which uses polynomial space -- thus $M_H$-satisfiability is in {\sf A}$\PSPACE = \EXP$. This can be done by applying the following method: always keep the formulas prefixed by a certain prefix $\sigma$ in memory (at first $\sigma = (0,0)$). First apply all the tableau rules you can on the formulas prefixed by $\sigma$ -- possibly use existential nondeterministic choices for this. Then, using a universal choice, pick one of the prefixes $\sigma' = \sigma.(g,i)$ that were just constructed and replace the formulas you have in memory by the ones prefixed by $\sigma'$. Repeat these steps until we either have $\sigma\ T\ \psi$ and $\sigma\ F\ \psi$ in memory or we see ``enough'' prefixes. In this case, ``enough'' would mean ``more than $2^{6|\phi|}$'', as $\phi$ has up to $|\phi|$ subformulas, so in a branch there can only be up to $6|\phi|$ formulas prefixed by some fixed $\sigma$ -- thus the algorithm only needs to use $O(|\phi|)$ memory and  if it goes through $6|\phi|+1$ prefixes, then two of these have prefixed exactly the same set of formulas.
		If the algorithm accepts $\phi$, then we can easily reconstruct an accepting branch by just taking the union of the constructed formulas, while if there is an accepting branch, then the algorithm can 
		explore only parts of that branch.
		\qed
	\end{proof}

These results demonstrate a remarkable variability of the system. Although many logics in the family, including the single-agent justification logics, have a $\Sigma_2^p$-complete satisfiability problem,
which is lower than the complexity of satisfiability for corresponding modal logics (assuming $\PH \neq \PSPACE$),
there are logics with \PSPACE-complete, \EXP-complete, and as we demonstrated in this paper, \NEXP-complete satisfiability problems, which in the last case is a higher complexity than the one for the corresponding modal logic (assuming $\EXP \neq \NEXP$). Still, it is important to note that even in this case the reflected fragment of the logic remains in \NP\ and in the absence of $+$, in \P.

\subsubsection*{Acknowledgments}
The author is grateful to Sergei Artemov and to an anonymous reviewer; their suggestions significantly enhanced this paper's readability.

\end{document}